\documentclass[authoryear,preprint,10pt]{elsarticle}

\usepackage{amsmath}
\usepackage{amssymb}
\usepackage{graphicx}
\usepackage{cite}
\usepackage{color}
\usepackage{lineno}
\usepackage{natbib}
\newtheorem{thm}{Theorem}

\newtheorem{proof}{Proof}

\usepackage{setspace}
\usepackage{ulem}

\journal{Journal of Theoretical Biology}

\begin{document}

\begin{frontmatter}
\title{Cancer initiation with epistatic interactions between driver and passenger mutations}

\author[Ploen]{Benedikt Bauer}
\author[Kiel]{Reiner Siebert}
\author[Ploen]{Arne Traulsen}

\address[Ploen]{Evolutionary Theory Group, Max Planck Institute for Evolutionary Biology, August-Thienemannstra\ss e 2, 24306 Pl\"{o}n, Germany}
\address[Kiel]{Institute of Human Genetics, Christian-Albrechts-University Kiel \& University Hospital Schleswig-Holstein, Campus Kiel, Schwanenweg 24, 24105 Kiel, Germany}

\begin{abstract}
We investigate the dynamics of cancer initiation in a mathematical model with one driver mutation and several passenger mutations. 
Our analysis is based on a multi type branching process: We model individual cells which can either divide or undergo apoptosis. 
In case of a cell division, the two daughter cells can mutate, which potentially confers a change in fitness to the cell. 
In contrast to previous models, the change in fitness induced by the driver mutation depends on the genetic context of the cell, 
in our case on the number of passenger mutations. 
The passenger mutations themselves have no or only a very small impact on the cell's fitness. 
While our model is not designed as a specific model for a particular cancer, the underlying idea is motivated by clinical and experimental observations in Burkitt Lymphoma. 
In this tumor, the hallmark mutation leads to deregulation of the {\it MYC} oncogene which increases the rate of apoptosis, but also the proliferation rate of cells.
This increase in the rate of apoptosis hence needs to be overcome by mutations affecting apoptotic pathways, naturally leading to an epistatic fitness landscape.
This model shows a very interesting dynamical behavior which is distinct from the dynamics of cancer initiation in the absence of epistasis. 
Since the driver mutation is deleterious to a cell with only a few passenger mutations, there is a period of stasis in the number of cells until a clone of cells with enough passenger mutations emerges. 
Only when the driver mutation occurs in one of those cells, the cell population starts to grow rapidly.
\end{abstract}

\begin{keyword}
Cancer \sep modeling \sep somatic evolution \sep population dynamics 

\end{keyword}

\end{frontmatter}


\journal{Journal of Theoretical Biology}

\section{Introduction}

Tumors develop by accumulating different mutations within a cell, which affect the cell's reproductive fitness 
\citep{armitage:BJC:1954,lengauer:Nature:1998,hanahan:Cell:2000,michor:NRC:2004,wodarz:book:2005,sjoblom:Science:2006,greenman:Nature:2007,wood:Science:2007,Jones:PNAS:2008,attolini:ANAS:2009,parmigiani:Genomics:2009,gerstung:MPS:2010}. 
As \citet{bozic:PNAS:2010}, we refer to the fitness of a mutated cell as the ratio between a cell's rate to proliferate and the cell's rate of apoptosis compared to wild type cells. 
The higher the fitness, the more likely it is for the cell to proliferate. 
For high fitness values, the population of cells growths very fast and stochastic effects play a minor role. 
In our model, this can be thought of as the formation of a tumor.

However, many mutations have no impact on the cell's fitness, e.g.\ mutations not affecting coding or regulatory sequences. 
Other mutations may lead to a fitness disadvantage, which implies that the cell's risk of apoptosis is higher than its chance of proliferation. However, the same mutations in combination with other mutations within the same cell might lead to a large fitness advantage.

We were motivated by genetic studies in Burkitt Lymphoma, 
a highly aggressive tumor,
where a single genetic alteration has an impact on a wide range of other genes, some of them affect cell growth while others induce apoptosis. 
More specifically, a chromosomal translocation between the {\it MYC} protooncogene on chromosome 8 and one of three immunoglobulin ({\it IG}) genes is found in almost every case of Burkitt Lymphoma 
\citep{richter:NG:2012,allday:SCB:2009,hummel:NEJM:2006,sander:CC:2012}. 
This leads to deregulated expression of the {\it MYC} RNA and in consequence, to deregulated MYC protein expression. The MYC protein acts as a transcription factor and has recently been shown to be a general amplifier of gene expression \citep{nie:Cell:2012,lin:Cell:2012}, targeting a wide range of different genes. 
Most importantly, {\it MYC} expression induces cell proliferation. 
In Burkitt Lymphoma, the {\it IG-MYC} fusion
is evidently the key mutation for tumorigenesis
\citep{salaverria:JCO:2011,zech:IJC:1976,schmitz:cshp:2014,campo:NG:2012}.
However, {\it MYC} plays also a key role in inducing apoptosis \citep{pelengaris:Cell:2002,wang:CBT:2011,hoffman:Oncogene:2008}.
Thus, the {\it IG-MYC} translocation alone would lead to cell death. 
Therefore, the {\it IG-MYC} translocation has to be accompanied by additional mutations, which deregulate the apoptosis pathways, such as mutations affecting e.g.\ {\it TP53} or {\it ARF} \citep{richter:NG:2012,allday:SCB:2009,sander:CC:2012}. 
These additional mutations have probably only little direct impact on the cell's fitness, since apoptosis is rare.
Hence, these mutations cannot be considered as primary driver mutations in the context of Burkitt Lymphoma. 
However, in combination with the {\it MYC} mutation these additional mutations decrease the apoptosis rate.
Consequently, the cells proliferate fast and the population grows accordingly, leading to tumorigenesis.
Because all cells carry the {\it MYC} mutation in Burkitt Lymphoma, but fast growth does not start immediately with that mutation,
it seems to confer its large fitness advantage only in a certain genetical context. 
Thus, interactions between different mutations may crucially affect the dynamics of cancer progression. 
Due to the fact that those additional mutations do not confer a direct fitness advantage, they cannot be considered as driver mutations.
Nevertheless, at least some of them are necessary in order for the {\it MYC} mutation to become advantageous for the cell.
Therefore, they cannot be regarded as true passenger mutations, either.
Throughout this manuscript, we therefore call these additional mutations ``secondary driver mutations''.

Besides Burkitt Lymphoma, epistatic effects in cancer initiation seem also to be relevant for other cancers. 
For example, we can think of the inactivation of a tumor suppressor gene discussed by Knudson in the context of retinoblastoma \citep{knudson:PNAS:1971}.
This inactivation is neutral for the first hit but highly advantageous for the second hit, 
and can hence be viewed as an interaction of genes \citep{nowak:PNAS:2002,nowak:PNAS:2004,vogelstein:NM:2004,iwasa:JTB:2005}. 
Another case is found in lung carcinomas, where activation of each of two oncogenes ({\it SOX2} and {\it PRKCI}) alone is insufficient,
but in concert they initiate cancer \citep{justilien:CC:2014}.
In other cases, there is clear evidence for sign epistasis: 
The {\it ras} family of proto-oncogenes is also discussed to underlie epistatic effects. 
Amplification of {\it ras} leads to senescence in the cell. 
Nevertheless, {\it ras} is a well known oncogenic driver gene. 
Hence, the {\it ras} mutation needs to be accompanied by other mutations \citep{elgendy:MC:2011,serrano:CELL:1997}.
Moreover, the difficulty to distinguish between drivers and passengers \citep{futreal:CC:2007,frohling:CC:2007} suggests that
for a full understanding of cancer initiation it is insufficient to think of these two types of mutations only.  

So far, most models have focused on the idea that passenger mutations have no effect or only a little effect, 
whereas each driver mutation increases the fitness of the cell \citep{michor:NRC:2004,beerenwinkel:PlosCB:2007,bozic:PNAS:2010,gerstung:MPS:2010,antal:JSM:2011,reiter:EA:2013, durrett:TPB:2010, datta:EA:2013}. 
Other models focus on the neutral accumulation of mutations \citep{durrett:AnAP:2009, luebeck:PNAS:2002}. 
Moreover, different mutations are typically treated as independent, which is a strong assumption that will
often not be fulfilled. 
In our model, mutations are interacting in an epistatic way \citep{wolf:book:2000}: 
the change in fitness induced by the driver mutation depends strongly on the genetic environment, 
i.e.\ in our case on the number of secondary driver mutations that are present in that cell. 
In addition we assume that the secondary driver mutations alone have almost no fitness advantage. 
Such a dependence between mutations can strongly affect the dynamics of cancer initiation. 
In evolutionary biology, epistatic systems are often analyzed regarding the structure or ruggedness of the landscape and the accessibility of different pathways \citep{weinreich:Evolution:2005a,franke:PLosCB:2011,szendro:PNAS:2013}.
The experimental literature also studies which factors can lead to epistasis \citep{visser:PRSB:2011,szappanos:NG:2011}.
Here, we are interested in the dynamics of such an epistatic model, which we illustrate by stochastic, individual based simulations. 
In addition, we derive analytical results for the average number of cells with different combinations of mutations and find a good agreement with the average dynamics in individual based computer simulations.
Furthermore, we discuss the computation of the waiting time until cancer initiation.
Our results show that the dynamics in such systems of epistatic interactions are distinct from previous models of cancer initiation \citep{michor:NRC:2004,beerenwinkel:PlosCB:2007,bozic:PNAS:2010,gerstung:MPS:2010,antal:JSM:2011,reiter:EA:2013}, which may have important consequences for the treatment of such cancers. 
While in previous models there is a steady increase in growth with every new mutation, in our model there is a period of stasis followed by a rapid tumor growth.

Of course, the biology of Burkitt Lymphoma is much more complex than modelled herein. 
To make the model more realistic one would have to distinguish between the different secondary driver mutations, since different genes contribute differently to the cells fitness, especially in a cell where the {\it IG-MYC} fusion is present.
Our model is not aimed to realistically describe such a situation in detail.
Instead, we focus on the extreme case of the so called \textit{all-or-nothing} epistasis 
\citep{Barrick:NRG:2013,meyer:Science:2012} to illustrate its effect on the dynamics of cancer initiation. 
As there is no theoretical analysis of epistatic effects in cancer initiation so far, 
a well understood minimalistic model seems to be necessary in order to illustrate the potential impact of epistasis on cancer progression.
Our minimalistic model clearly shows that epistasis can lead to a qualitatively different dynamics of cancer initiation.

\section{Mathematical model}\label{sec:methods}

We analyze cancer initiation in a homogenous population of initially $N$ cells with discrete generations.
In every generation, each of the $N$ cells can either die or divide. 
If a cell divides, its two daughter cells can mutate with mutation probabilities 
$\mu_\mathrm{D}$ for the driver mutation and $\mu_\mathrm{P}$ for secondary driver mutations (where the $P$ indicates that these would be called passenger mutations in closely related models). 
In principle, we could drop the assumption that these two mutation probabilities are independent on the cell of origin, but this would lead to inconvenient notation.
We neglect back mutations and multiple mutations within one time step, because their probabilities are typically very small.
Figure~\ref{fig:model} summarizes the possible mutational pathways of the model. 
The variables $x_{i,j}$ denote the number of cells with or without the primary driver mutation ($i=1$ or $i=0$ respectively), and $j$ secondary driver mutations.
\begin{figure}[h]
\begin{center}
\includegraphics[width=0.5\textwidth]{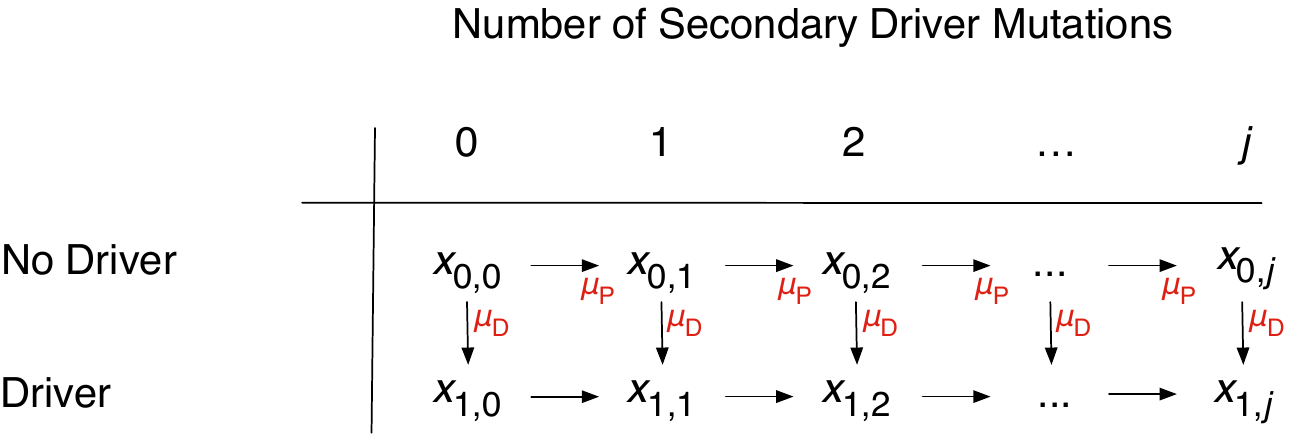}\\
\includegraphics[width=0.49\textwidth]{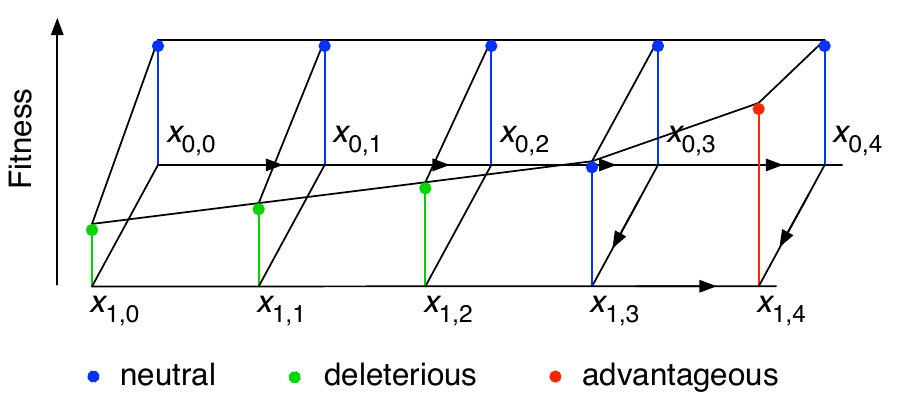}
\caption{Mutational pathways of the model. 
Left: The entries $x_{i,j}$ denote the number of cells with or without the primary driver mutation ($i=1$, or $i=0$ respectively), and $j$ secondary driver mutations. 
Right: Cells with only secondary driver mutations have neutral or nearly neutral fitness. 
The fitness of cells with the primary driver mutation depends on the number of secondary driver mutations within the cell, leading to an epistatic fitness landscape.}
\label{fig:model}
\end{center}
\end{figure}

A cell's probability for apoptosis and proliferation depends on the presence of the primary driver mutation and on the number of secondary driver mutations it has accumulated. 
For cells with no mutations, the division and apoptosis probabilities are both equal to $\tfrac{1}{2}$. 
This implies that the number of cells is constant on average as long as no further mutations occur. 
We assume that the initial number of cells is high and thus we can neglect that the population would go extinct \citep{haccou:book:2005}.
For our parameter values, the expected extinction time of our critical branching process
exceeds the average life time of the organism by far.

For cells without the primary driver mutation, each secondary driver mutation leads to a change in the cell's fitness by $s_\mathrm{P}$.
For cells with the primary driver mutation, the fitness advantage obtained with each secondary driver mutation is $s_\mathrm{DP}$. 
The driver mutation increases both the apoptosis rate and the proliferation rate.
The increase in the apoptosis rate is $s_{\mathrm{D}^-}$ and the increase in the division rate is $s_{\mathrm{D}^+}$. 
With these parameters, the proliferation rate 
$b_{0,j}$
for cells with $j$ secondary driver mutations but without the primary driver mutation is
\begin{align}
 b_{0,j} &= \frac{1}{2}(1+s_\mathrm{P})^j,
\label{eq:ratesb0}
\end{align}
whereas the proliferation rate 
$b_{1,j}$
for such cells with the primary driver mutation is
\begin{align}
 b_{1,j} &= \frac{1}{2}\cdot \frac{1+s_\mathrm{D}^+}{1+s_\mathrm{D}^-}(1+s_\mathrm{DP})^j.
 \label{eq:ratesb1}
\end{align}
The apoptosis rates, denoted as $d_{0,j}$ and $d_{1,j}$ are simply one minus the proliferation rate
\begin{align}
\begin{split}
d_{0,j} &= 1 - \frac{1}{2}(1+s_\mathrm{P})^j, \\
d_{1,j} &= 1- \frac{1}{2}\cdot \frac{1+s_\mathrm{D}^+}{1+s_\mathrm{D}^-} (1+s_\mathrm{DP})^j.
\end{split}
\label{eq:ratesd}
\end{align}

\section{Results}
\subsection{Simulations}

Mutations occur at fixed rates $\mu_\mathrm{D}$ and $\mu_\mathrm{P}$ for primary and secondary drivers, respectively. 
For a long time, the overall fitness  does not increase noticeably. 
For $s_\mathrm{P}=0$, it stays on average constant.
Hence, the total number of cells stays approximately constant. 
Only when a cell with enough secondary driver  mutations and also the primary driver mutation arises, the cell's fitness is increased substantially beyond the fitness of other cells and its chance of proliferation is significantly increased. 
At that point, the total number of cells starts to increase rapidly, see Figure \ref{fig:total}.
In models presented in literature so far, the cell's fitness is increased independently with every (driver) mutation \citep[see e.g.\ ][]{beerenwinkel:PlosCB:2007,bozic:PNAS:2010}.
Although the total number of cells increases exponentially, 
these models do not find a sudden burst in the number of cells. 
\begin{figure}[h]
\begin{center}
 \includegraphics[width=0.8\textwidth]{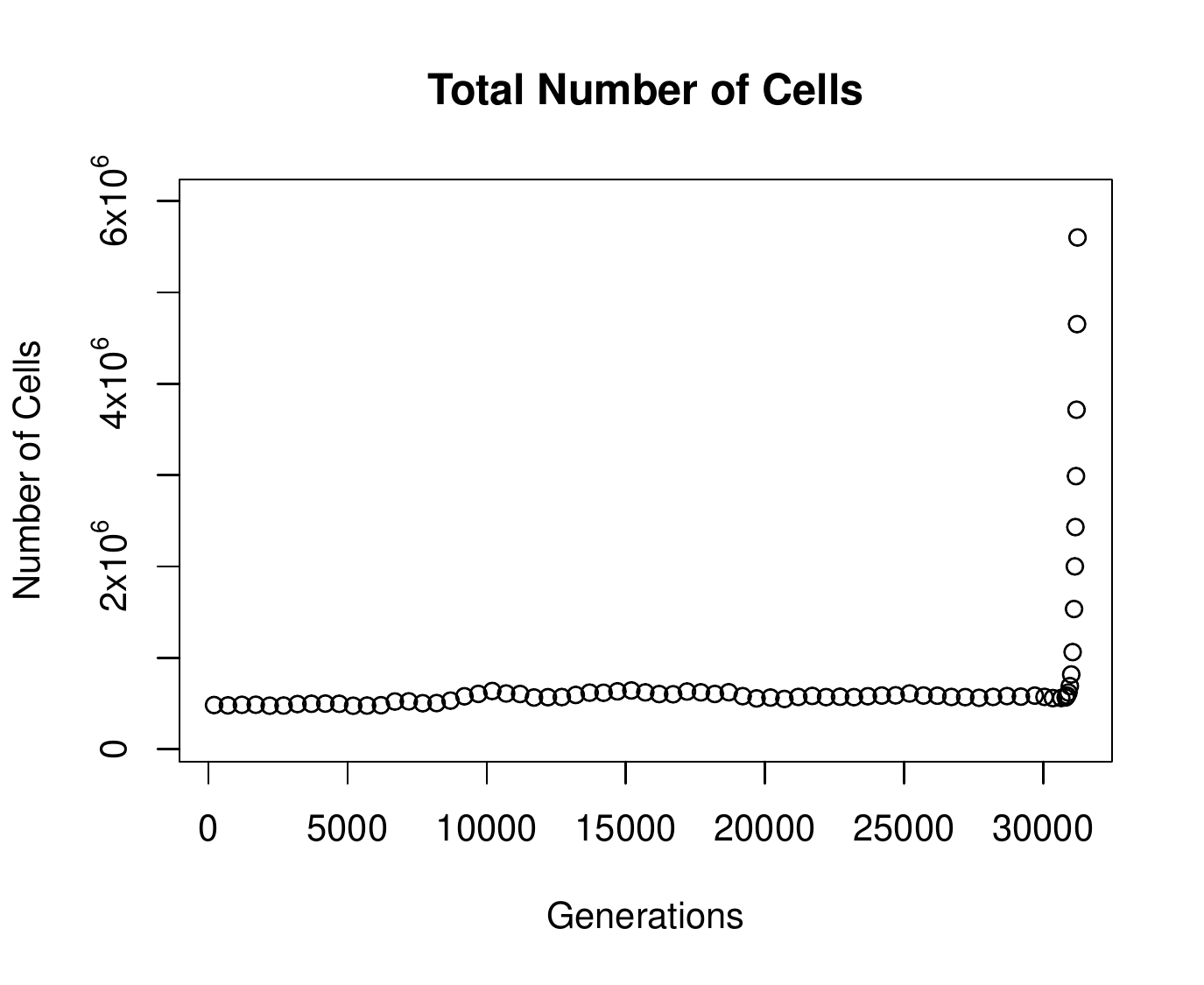}
 \caption{The dynamics of the total number of cells.
 Initially, the total cell count increases only marginally but at some point, a combination of primary and secondary driver mutations within one cell with a large fitness benefit arises and leads to rapid exponential proliferation
 (parameters: 
Initial number of cells $N=500000$, 
secondary driver fitness advantage $s_\mathrm{P}=10^{-5}$, 
the primary driver fitness advantage $s_{\mathrm{D}^+}=0.05$, 
primary driver disadvantage $s_{\mathrm{D}^-}=0.1$, 
advantage of a secondary driver mutation in the presence of the primary driver mutation $s_\mathrm{DP}=0.015$,
mutation rates for secondary driver mutations $\mu_\mathrm{P}=2\cdot10^{-5}$,
mutation rate for the primary driver mutation $\mu_\mathrm{D}=5\cdot10^{-6}$).
 }
 \label{fig:total}
\end{center}
\end{figure}
\begin{figure}[htb]
\begin{center}
\includegraphics[width=1.0\textwidth]{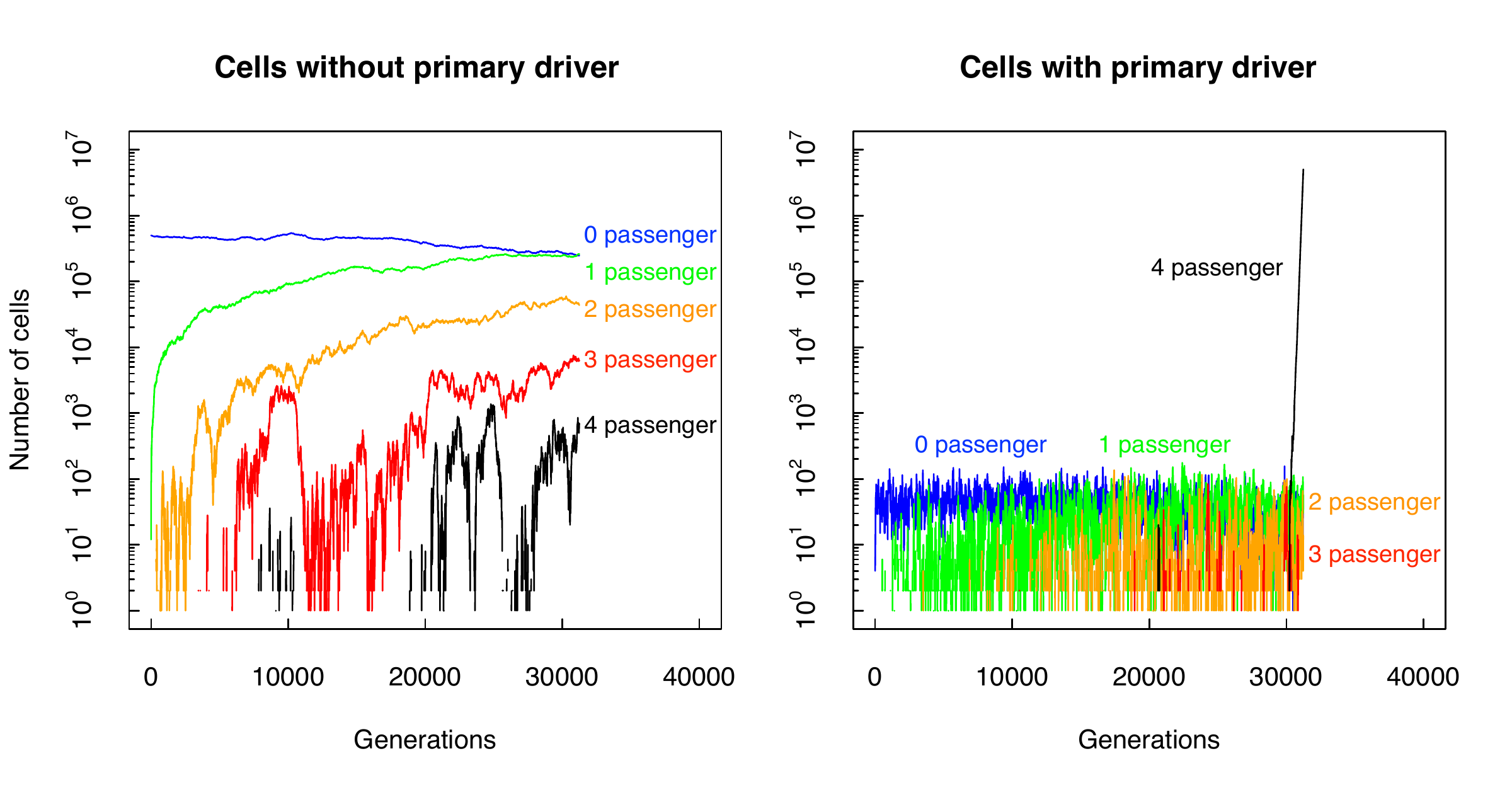}
\caption{
The dynamics of the number of cells with different numbers of mutations in a single simulation.
Top: The number of cells without any mutation decreases slightly, whereas the number of cells with secondary driver mutations, but no primary driver mutation, slowly increases. 
Bottom: While a small number cells with the primary driver mutation is present from the beginning, at first these primary driver mutations are not accompanied by sufficiently many secondary driver mutations to compensate the disadvantage arising from the primary driver. Only when a primary driver mutation is co-occurring with enough secondary driver mutations (in this case four), the number of cells with the primary driver starts to increase rapidly. 
(parameters: 
$N=500000$, 
$s_\mathrm{P}=10^{-5}$, 
$s_{\mathrm{D}^+}=0.05$, 
$s_{\mathrm{D}^-}=0.1$, 
$s_\mathrm{DP}=0.015$,
$\mu_\mathrm{P}=2\cdot10^{-5}$,
$\mu_\mathrm{D}=5\cdot10^{-6}$)}
\label{fig:single}
\end{center}
\end{figure}
In Figure \ref{fig:single}, the total number of cells is subdivided into the number of cells with different numbers of mutations. 
The left panel presents the cells that have not acquired the primary driver mutation, 
the right one shows cells with the primary driver mutation.
Cells with the primary driver mutation, but not enough secondary driver mutations, arise occasionally, but those cells die out quickly again -- thus, their average abundance is small. 
Cells without the primary driver mutation do not die out, they also do not induce fast growth, cf.\ Figure \ref{fig:single}.
Only cells that have obtained enough secondary driver mutations and in addition acquire the primary driver mutation, divide so quickly that the population size increases rapidly.

The parameters in our figures have been chosen such that a cells acquires a substantial growth advantage once the primary driver mutation co-occurs with 4 secondary driver mutations.
This event can occur at any time and hence, in some simulation the number of cells can increase very early, whereas in other simulations the number of cells does not undergo fast proliferation for many generations.
Consequently, the rate of progression has an enormous variation. 
For the parameters from our figures, the time at which rapid proliferation occurs varied between $\approx 9300$ and $\approx 63000$ generations in 500 simulations. The distribution of these times is discussed in more detail below.

\subsection{Analytical results}
\subsubsection{Average Number of Cells}
We can calculate the average number of cells with a certain number of mutations at a given generation $t$. The number of cells which do not have the primary driver mutation and $k$ secondary driver mutations (i.e. $x_{0,k}(t)$) changes on average by means of the cell's fitness and it decreases by the mutation rate
\begin{align}
x_{0,k}(t) & = (1-(\mu_\mathrm{P}+\mu_\mathrm{D}))(1+s_\mathrm{P})^{k} x_{0,k}(t-1) + \mu_\mathrm{P}(1+s_\mathrm{P})^{k-1}x_{0,k-1}(t-1),
\label{eq:x0k1pre4}
\end{align}
where $x_{0,-1}(t) \equiv 0$.
The solution of Equation \eqref{eq:x0k1pre4} 
for $s_\mathrm{P} \neq 0$, i.e.\
if the secondary driver mutations are not neutral, is
\begin{align}\label{eq:pass2}
x_{0,k}(t) = N\mu_\mathrm{P}^k (1-(\mu_\mathrm{D}+\mu_\mathrm{P}))^{t-k}(1+s_\mathrm{P})^{k(k-1)/2} \prod_{i=0}^{k-1}\frac{1-(1+s_\mathrm{P})^{t-i}}{1-(1+s_\mathrm{P})^{i+1}},
\end{align}
where $N$ denotes the initial number of cells. 
The mathematical proof of Equation \eqref{eq:pass2} is given in \ref{subsec:withoutDriver}.
Note, that the product can be written in terms of a $q$-binomial coefficient \citep{koekoek:book:2010},
\begin{align}
\setstretch{1}
\prod_{i=0}^{k-1}\frac{1-(1+s_\mathrm{P})^{t-i}}{1-(1+s_\mathrm{P})^{i+1}} = \begin{bmatrix} t\\k \end{bmatrix}_{1+s_\mathrm{P}}.
\end{align}
For the case $s_\mathrm{P} = 0$, we take the limit of the $q$-binomial coefficient \citep[e.g.][]{kac:book:2002}
\begin{align}
\setstretch{1}
 \lim_{s_\mathrm{P}\rightarrow 0} \begin{bmatrix} t\\k \end{bmatrix}_{1+s_\mathrm{P}} = \binom{t}{k}
\end{align}
and obtain
\begin{align}\label{eq:pass1}
 x_{0,k}(t) = N\mu_\mathrm{P}^k (1-(\mu_\mathrm{D}+\mu_\mathrm{P}))^{t-k} \binom{t}{k},
 \end{align}
which is the result that is also expected if the secondary driver mutations are neutral and accumulated independently of each other.

Intuitively, the term $\mu_\mathrm{P}^k(1-(\mu_\mathrm{D}+\mu_\mathrm{P}))^{t-k}$ describes the probability of obtaining exactly $k$ mutations in $t$ generations. 
There are different possibilities when the mutations happen, these possibilities are captured
by the binomial coefficient $\binom{t}{k}$. 
Thus, we have a growing polynomial term in $t$ and a declining exponential term in $t$, since $(1-(\mu_\mathrm{D}+\mu_\mathrm{P}))<1$.

In the case of $s_\mathrm{P}\neq0$, the interpretation is similar. 
Here, additionally the fitness advantage for secondary driver mutations has to be taken into account. Since the number of cells with $j$ secondary driver mutations grows with $(1+s_\mathrm{P})^j$, also the number of cells that can mutate grows. Hence, the factor $(1+s_\mathrm{P})^{k(k-1)/2}$ is multiplied to the expression and the binomial coefficient turns into the $q$-binomial coefficient.

\begin{table}[h]
\begin{center}
\label{tbl:abbrev}
\begin{tabular}{ll}
$\mu_\mathrm{P}$ & Mutation rate for secondary driver mutations \\
$\mu_\mathrm{D}$ & Mutation rate for the primary driver mutation \\
 $\nu_\mathrm{P} = 1-\mu_\mathrm{D}-\mu_\mathrm{P}$ & Probability for a cell without the primary driver mutation to not mutate \\
$\nu_{\mathrm{D}} = 1-\mu_\mathrm{P}$ & Probability for a cell with the primary driver mutation to not mutate \\
$s_\mathrm{P}$ & Fitness change of a secondary driver mutation (see \eqref{eq:ratesb0}, \eqref{eq:ratesb1}) \\
$s_{\mathrm{D}^+}$ & Fitness advantage of the primary driver mutation (see \eqref{eq:ratesb0}, \eqref{eq:ratesb1})\\
$s_{\mathrm{D}^-}$ & Fitness disadvantage of the primary driver mutation (see \eqref{eq:ratesb0}, \eqref{eq:ratesb1})\\
$s_{\mathrm{DP}}$ & Fitness advantage of combination of a secondary driver and the primary\\
&  driver mutation \\
$ \varsigma_\mathrm{P} = 1+s_\mathrm{P}$ & Fitness according to a secondary driver without the primary driver mutation \\
$\varsigma_\mathrm{DP}=1+s_\mathrm{DP}$ & Fitness according to the combination of primary and secondary driver mutation \\
$\varsigma_\mathrm{D}=\frac{1+s_\mathrm{D}^+}{1+s_\mathrm{D}^-} \approx 1+s_\mathrm{D}^+-s_\mathrm{D}^-$ & Fitness according to a primary driver mutation with no secondary driver
\end{tabular}
\caption{Summary of our abbreviations}
\end{center}
\end{table}

For cells that have obtained the primary driver mutation and $k$ secondary driver mutations, the situation is slightly more complex.
There are $k+1$ different possibilities on how to obtain $k$ secondary driver and the primary driver mutation, since some of the secondary driver mutations may have occurred before the primary driver mutation has been acquired
whereas others may have occurred afterwards. 
Let $x_{1,k}^{(p)}(t)$ denote the number of cells with the primary driver mutation and $k$ secondary driver mutations, when the primary driver mutation has happened in a cell with $p$ secondary driver mutations. 
Note that $0 \leq p \leq k$. 
The change in the number of cells now depends on $p$. 
Using the abbreviations from Table \ref{tbl:abbrev} to simplify our notation, we have
\begin{align}
 x_{1,k}^{(p)}(t) = \begin{cases} \nu_\mathrm{D}\varsigma_\mathrm{D}\varsigma_\mathrm{DP}^k x_{1,k}^{(p)}(t-1) + \mu_\mathrm{P}\varsigma_\mathrm{D}\varsigma_\mathrm{DP}^{k-1} x_{1,k-1}^{(p)}(t-1), \quad & \mbox{if } p<k \\
                                                         \nu_\mathrm{D}\varsigma_\mathrm{D}\varsigma_\mathrm{DP}^k x_{1,k}^{(p)}(t-1) + \mu_\mathrm{D} \varsigma_\mathrm{P}^{k} x_{0,k}(t-1), & \mbox{if } p=k. \end{cases}
 \label{eq:x1kppre}
\end{align}
To express the average number of cells in total we need to sum over all possible pathways,
\begin{align}\label{eq:driversum}
 x_{1,k}(t) = \sum_{p=0}^k x_{1,k}^{(p)}(t).
\end{align}
In \ref{subsec:withDriver}, we proof that the analytical solution of Equation \eqref{eq:driversum} is
\begin{align}\label{eq:driv2}
\setstretch{1}
\begin{split}
 x_{1,k}(t)=&N\sum_{p=0}^{k}\mu_D\mu_P^k \varsigma_\mathrm{D}^{k-p} \varsigma_\mathrm{DP}^{(k(k-1)-p(p-1))/2}\frac{\varsigma_\mathrm{P}^{p(p+1)/2}}{\prod_{n=0}^{p-1}(1-\varsigma_\mathrm{P}^{n+1})}
\\
  \times & \left[ \nu_\mathrm{P}^{t-p}\left( \sum_{r=0}^{p} \frac{\left(-\varsigma_\mathrm{P}^{t-p+1}\right)^r \varsigma_\mathrm{P}^{\frac{r(r-1)}{2}}}{\prod_{n=p}^{k}(\nu_\mathrm{P}\varsigma_\mathrm{P}^r-\nu_\mathrm{D}\varsigma_\mathrm{D}\varsigma_\mathrm{DP}^n)} \begin{bmatrix}p\\r\end{bmatrix}_{\varsigma_\mathrm{P}}  \right)  \right.   \\
  - & \sum_{j=p}^{k}\nu_\mathrm{P}^{j-p}(\nu_\mathrm{D}\varsigma_\mathrm{D}\varsigma_\mathrm{DP}^j)^{t-k} \prod_{m=j}^{k-1} \frac{1-\varsigma_\mathrm{DP}^{t-m-1}}{1-\varsigma_\mathrm{DP}^{k-m}}\\
  \times & \left. \left( \sum_{r=0}^{p} \frac{\left(-\varsigma_\mathrm{P}^{j-p+1}\right)^r \varsigma_\mathrm{P}^{\frac{r(r-1)}{2}}}{\prod_{n=p}^{j}(\nu_\mathrm{P}\varsigma_\mathrm{P}^r-\nu_\mathrm{D}\varsigma_\mathrm{D}\varsigma_\mathrm{DP}^n)} \begin{bmatrix}p\\r\end{bmatrix}_{\varsigma_\mathrm{P}} \right) \right],
  \end{split}
\end{align}
if $s_\mathrm{P} \neq 0$.
The summation over $p$ indicates the different mutational pathways.
An intuitive explanation of this somewhat lengthy equation is given in \ref{sec:Intuitive}.

Interestingly, the case for $s_\mathrm{P}=0$ is much more challenging. 
The underlying problem is that the normal binomial coefficient cannot be expressed in a sum in the way the $q$-binomial coefficient can be expressed \citep{koekoek:book:2010},
\begin{align}\label{eq:qbinomsum}
\setstretch{1}
 \begin{bmatrix}t\\p\end{bmatrix}_{\varsigma_\mathrm{P}} 
 = \prod_{j=0}^{p-1} \frac{1-\varsigma_\mathrm{P}^{t-j}}{1-\varsigma_\mathrm{P}^{j+1}}
 = \frac{\sum_{r=0}^{p} \left( -\varsigma_\mathrm{P}^t\right)^r\left( 1/\varsigma_\mathrm{P}\right)^{\frac{r(r-1)}{2}} \begin{bmatrix} p\\r \end{bmatrix}_{1/\varsigma_\mathrm{P}}}{\prod_{j=0}^{p-1}(1-\varsigma_\mathrm{P}^{j+1})}.
\end{align}
When summing over all generations of the population with $p$ secondary driver mutations to derive the expression for the population of cells with $p+1$ secondary driver mutations, we have to calculate the sum
\begin{align}
 \sum_{i=0}^{t-p-1} \left(\varsigma_\mathrm{D}\varsigma_\mathrm{DP}^p\right)^i\binom{t-i-1}{p}.
\end{align}
When we go further and try to calculate the expression for the population with $k$ secondary driver mutations, we need to apply this sum $(k-p)$-times and hence we obtain a multi sum,
\begin{align}
\label{eq:multisum}
 \sum_{i_0=0}^{t-k-1} \left(\varsigma_\mathrm{D}\varsigma_\mathrm{DP}^{k}\right)^{i_0} \sum_{i_1=0}^{t-k-i_0-2}\left(\varsigma_\mathrm{D}\varsigma_\mathrm{DP}^{k-1}\right)^{i_1} \cdots& \sum_{i_{k-p}=0}^{t-2k+p-i_0-i_1-\cdots -i_{k-p-1}-1} \left(\varsigma_\mathrm{D}\varsigma_\mathrm{DP}^{p}\right)^{i_{k-p}} \\
 \nonumber
&\times \binom{t-2k+p -i_0-i_1- \cdots -i_{k-p-1} -1}{p}.
\end{align}
Only an analytical expression for this multi sum would allow a closed solution of the problem with $s_\mathrm{P}=0$.
Also taking the limit $s_\mathrm{P} \to 0$ of our expression for  $s_\mathrm{P}\neq0$ is a substantial mathematical challenge. 
However, we can use our solution for $s_\mathrm{P}\neq0$ for arbitrarily small values of $s_\mathrm{P}$. 
Moreover, numerical considerations show that the result for $s_\mathrm{P}=0$ is very close to the case of $s_\mathrm{P}\ll 1$.

In Figure \ref{fig:dyn}, the dynamics of the average number of cells with a certain number of mutations, is shown, both without and with the primary driver mutation.
Simulation results for $s_\mathrm{P}=0$ agree very well with the analytical result obtained for $s_\mathrm{P}\neq0$. 

\begin{figure}[htp]
\begin{center}
 \includegraphics[width=1.0\textwidth]{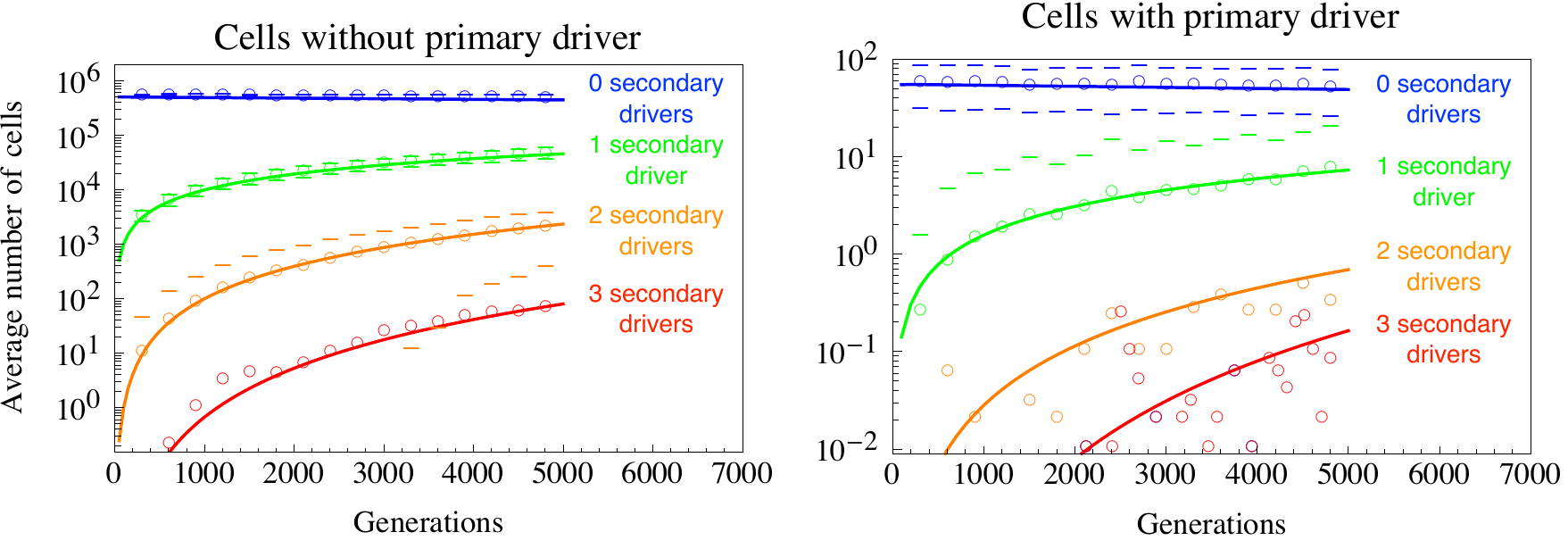}
 \caption{Dynamics of the number of cells with different number of secondary driver mutations, without (left) and with (right) the primary driver mutation. 
Simulation results averaged over 500 independent realizations for $s_\mathrm{P}=0$ (circles) agree almost perfectly with the analytical result obtained for $s_\mathrm{P} = 10^{-5}$.
The bars represent the standard deviation. 
Cells with no mutation have a very small relative standard deviation and cells with one mutation (i.e.\ one passenger only or the driver only) have a relatively small standard deviation. 
In contrast, cells with two passenger mutations for instance have a very broad standard deviation in the beginning that is approximately four times the average number. 
Only in few realizations, a primary driver mutations co-occurs with several secondary drivers, hence the simulation data for these cases shows a large spread 
(parameters: 
$N=500000$, 
$s_\mathrm{P}=10^{-5}$, 
$s_{\mathrm{D}^+}=0.05$, 
$s_{\mathrm{D}^-}=0.1$, 
$s_\mathrm{DP}=0.015$,
$\mu_\mathrm{P}=2\cdot10^{-5}$,
$\mu_\mathrm{D}=5\cdot10^{-6}$).}
 \label{fig:dyn}
 \end{center}
\end{figure}

\subsubsection{Distribution of time until cancer initiation}
Next, let us calculate the distribution of the time it takes until rapid proliferation occurs. 
Since we use a multi type, time discrete branching process, we can make use of the probability generating function to recursively calculate the probability for a certain cell type cell to be present at a certain time $t$ \citep{haccou:book:2005}.
Of particular interest is the probability that a tumor initiating cell is present, i.e.\ in the example above a cell with the primary driver mutation and 4 secondary driver mutations. 
Let $f_{i,j}(s_{0,0}, \ldots, s_{0,k},s_{1,0},\ldots,s_{1,k})$ be the probability generation function for the cell type $x_{i,j}$ in the branching process described above. 
The probability that there is no cell with the primary driver mutation and $k$ secondary driver mutations at time $t$, when the process starts with one $x_{0,0}$-cell, can be computed by recursively calculating $f(t) :=f_{0,0}^{\circ (t)} (1,\ldots, 1,0)$, where
\begin{align}
f_{i,j}^{\circ(t)}(\textbf{s}) = f_{i,j}^{}(f_{0,0}^{\circ(t-1)}(\textbf{s}),\ldots,f_{1,k}^{\circ(t-1)}(\textbf{s})).
\end{align}
Starting with $N$ cells, we have to consider $f(t)^N$. 
Thus, the probability that there is at least one tumor cell at time $t$ is $1-f(t)^N$. 
If we are interested in the probability $\tau(t)$ that there is at least one tumor cell exactly at time $t$, we need to subtract the probability, that there has been a tumor cell before $\tau(t)=1-f(t)^N - 1 + f(t-1)^N = f(t-1)^N-f(t)^N$. Figure \ref{fig:timecompare} shows the good agreement between simulations and the recursive calculation.
\begin{figure}
 \centering \includegraphics[width=0.7\textwidth]{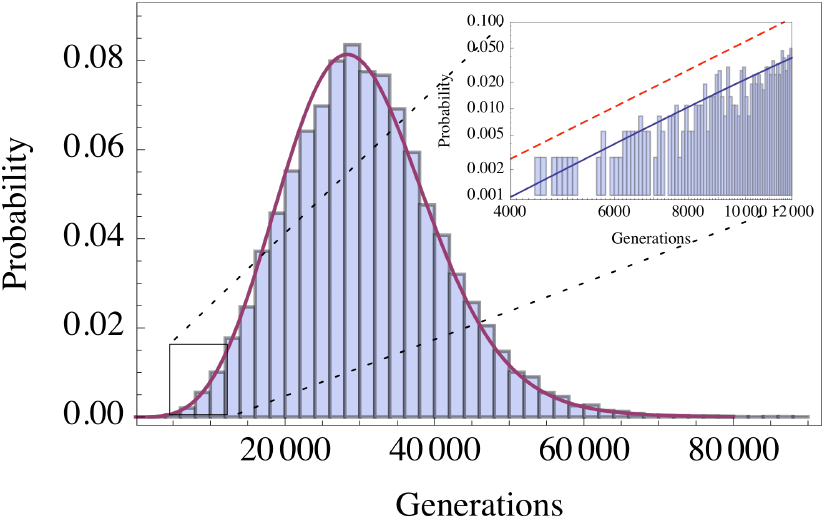}
 \caption{Comparison between the analytical calculation and simulations of the distribution until cancer initiation (main panel). 
 Solid lines represents analytic solution. 
 The analytical calculation and simulations of the model agree very well.
 This inset illustrates that the time distribution initially follows a power law with an exponent of $\approx 3.4$ shown as a dashed line
 (parameters: 
$N=500000$, 
$s_\mathrm{P}=10^{-5}$, 
$s_{\mathrm{D}^+}=0.05$, 
$s_{\mathrm{D}^-}=0.1$, 
$s_\mathrm{DP}=0.015$,
$\mu_\mathrm{P}=2\cdot10^{-5}$,
$\mu_\mathrm{D}=5\cdot10^{-6}$,
 distribution over 20000 independent realizations). 
 }
 \label{fig:timecompare}
\end{figure}

The time distribution for low $t$ follows a power law, as shown in the inset of Figure \ref{fig:timecompare}. 
The exponent of the power law is approximately $3.4$. 
If all mutations were neutral, one would expect a lead coefficient of approximately 4 to accumulate five mutations, 
as derived by \citet{armitage:BJC:1954}. 
In our case, the curve increases slower. Numerical considerations show that 
the main reason for this is that, in contrast to \citep{armitage:BJC:1954}, we allow extinction: 
Many lineages that have accumulated mutations go extinct before the final, 
cancer causing mutation arises.

\section*{Discussion}

Most models in literature assume that each 
mutation leads to an independent and steady increase in the cells' fitness \citep{beerenwinkel:PlosCB:2007,bozic:PNAS:2010,gerstung:MPS:2010,reiter:EA:2013,michor:NRC:2004}. 
In this context, neutral passenger mutations have no causal impact on cancer progression.
Only recently, some authors have considered passenger mutations not only as neutral byproducts of the clonal expansion of mutagenic cells, but as having a deleterious impact on the cells' fitness \citep{mcfarland:PNAS:2013}.

Here, we have described a model in which the fitness of the driver mutation strongly depends on the number of passenger mutations the cell has acquired. 
These passenger mutations, which we have termed secondary driver mutations, 
lead only to a small change in fitness or no change in fitness at all.
As illustrated in Figures \ref{fig:total}, \ref{fig:single}, and \ref{fig:dyn}, 
the number of cells stays roughly constant for a long time before it rapidly increases, despite the fact that mutations occur in the process permanently. 
This dynamic effect of cancer initiation is very different from models in which mutations do not interact with each other. 
We speculate that this kind of dynamics can have important implications for diagnosis and treatment.
In principle, the dynamics presented in Figure \ref{fig:total} can also be the result of one highly advantageous, but very unlikely driver mutation. 
But in such a case, cells with the driver mutation should not be present in the population before tumorigenesis. 
This contradicts with current knowledge about the {\it MYC} translocation which has also been detected  in humans without lymphoma \citep{muller:PNAS:1995}. 
This effect is well captured by our model, as shown in Figure \ref{fig:single}.

In some tumors, such as Burkitt Lymphoma, the neoplasms is only diagnosed after fast tumor growth has started.
In this case, sequencing studies have shown that several mutations are present at the time of examination 
\citep{schmitz:Nature:2012,alexandrov:Nature:2013,richter:NG:2012,love:NG:2012}.
Since the patients typically do not have any symptoms in before diagnosis of the cancer, 
it is possible that some mutations have virtually no direct impact on the cells fitness. 
Nevertheless, they are necessary for the initiation of the cancer, as they indirectly allow the driver mutation to initiate rapid cell growth. This agrees well with our epistatic model, where (nearly) neutral secondary driver mutations occur at a fixed rate before the cancer can be diagnosed. 

Of course, not all mutations have such an epistatic effect on primary driver mutations, some might even be considered deleterious \citep{mcfarland:PNAS:2013}. 
Nevertheless, our work shows that mutations that appear to be neutral in one context should not only be regarded as a neutral byproduct of the clonal expansion of mutagenic cells.
Instead, in some cases passenger mutations can have a serious impact in cancer initiation, in particular when there are non-trivial interactions between different mutations.
In this case the term ``passenger" may not be the most appropriate one.
To understand the impact of those interactions can be essential for a deeper understanding of the initiation of cancer.
\section*{Acknowledgments}
We thank J.\ Richter for stimulating discussions on Burkitt Lymphoma and B.\ Werner for helpful comments on our manuscript. 
Generous funding by the Max-Planck Society is gratefully acknowledged.
R.S.\ is supported the German Ministry of Education and Science (BMBF) through the MMML-MYC-SYS network Systems Biology of MYC-positive lymphomas (036166B) 

\appendix

\section{ \bf Analytic expression for the average number of cells without the primary driver mutation at generation $t$}\label{subsec:withoutDriver}

\subsection{Secondary driver fitness advantage is unequal to zero - $k$ secondary driver mutations}
We first consider the case without the primary driver mutation. We assume $s_\mathrm{P}\neq 0$ (and consequently $\varsigma_\mathrm{P} \neq 1$), as discussed in the main text. The rate change of cells with $k$ secondary driver mutations is
\begin{align}
x_{0,k}(t) & = \nu_{\mathrm{P}}\varsigma_\mathrm{P}^{k} x_{0,k}(t-1) + \mu_\mathrm{P}\varsigma_\mathrm{P}^{k-1}x_{0,k-1}(t-1),
\label{eq:x0k1pre}
\end{align}
where $x_{0,-1}(t) \equiv 0$.
The solution of \eqref{eq:x0k1pre} is formulated in the following theorem:

\begin{thm}\label{thm:x0k1}
For any integer $k \geq 0$, the number of cells with $k$ secondary driver mutations and no primary driver mutation is
\begin{align}
 x_{0,k}(t) = N\mu_\mathrm{P}^k \nu_{\mathrm{P}}^{t-k}\varsigma_\mathrm{P}^{k(k-1)/2} \prod_{n=0}^{k-1}\frac{1-\varsigma_\mathrm{P}^{t-n}}{1-\varsigma_\mathrm{P}^{n+1}}.
 \label{eq:x0k1}
\end{align}
\end{thm}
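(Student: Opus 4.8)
The plan is to exploit that the recurrence \eqref{eq:x0k1pre}, together with the natural initial data $x_{0,0}(0)=N$ and $x_{0,k}(0)=0$ for $k\geq 1$ (equivalently $x_{0,k}(t)=0$ whenever $0\leq t<k$), determines all the values $x_{0,k}(t)$ uniquely by forward recursion. Hence it suffices to verify that the right-hand side of \eqref{eq:x0k1} satisfies the same recurrence and the same initial data; uniqueness then gives the theorem. I would write $Y_k(t)$ for that right-hand side, in the compact form $Y_k(t)=N\mu_\mathrm{P}^k\nu_\mathrm{P}^{t-k}\varsigma_\mathrm{P}^{k(k-1)/2}\begin{bmatrix}t\\k\end{bmatrix}_{\varsigma_\mathrm{P}}$.

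First I would dispatch the easy parts. For $k=0$ the $q$-binomial equals $1$, so $Y_0(t)=N\nu_\mathrm{P}^t$ solves $x_{0,0}(t)=\nu_\mathrm{P}x_{0,0}(t-1)$ with $Y_0(0)=N$. For $k\geq 1$ and $0\leq t<k$, the product $\prod_{n=0}^{k-1}(1-\varsigma_\mathrm{P}^{t-n})$ contains the factor with $n=t$, namely $1-\varsigma_\mathrm{P}^{0}=0$, so $Y_k(t)=0=x_{0,k}(t)$. At $t=k$, the recurrence reduces (since $x_{0,k}(k-1)=0$) to $x_{0,k}(k)=\mu_\mathrm{P}\varsigma_\mathrm{P}^{k-1}x_{0,k-1}(k-1)$, while the product in $Y_k(k)$ telescopes to $1$, giving $Y_k(k)=N\mu_\mathrm{P}^k\varsigma_\mathrm{P}^{k(k-1)/2}$, which is exactly $\mu_\mathrm{P}\varsigma_\mathrm{P}^{k-1}Y_{k-1}(k-1)$.

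The substantive step is checking the recurrence for $t\geq k+1$. Substituting $Y_k(t-1)$ and $Y_{k-1}(t-1)$ into the right-hand side of \eqref{eq:x0k1pre} and factoring out the common prefactor $N\mu_\mathrm{P}^k\nu_\mathrm{P}^{t-k}\varsigma_\mathrm{P}^{k(k-1)/2}$ — where one uses the elementary exponent identity $(k-1)+\tfrac{(k-1)(k-2)}{2}=\tfrac{k(k-1)}{2}$ to reconcile the $\varsigma_\mathrm{P}$-powers coming from the mutation term — the required equality $\nu_\mathrm{P}\varsigma_\mathrm{P}^{k}Y_k(t-1)+\mu_\mathrm{P}\varsigma_\mathrm{P}^{k-1}Y_{k-1}(t-1)=Y_k(t)$ collapses to the $q$-Pascal relation
\begin{align}
\begin{bmatrix}t\\k\end{bmatrix}_{\varsigma_\mathrm{P}} = \varsigma_\mathrm{P}^{k}\begin{bmatrix}t-1\\k\end{bmatrix}_{\varsigma_\mathrm{P}} + \begin{bmatrix}t-1\\k-1\end{bmatrix}_{\varsigma_\mathrm{P}},
\end{align}
one of the two standard recursions for the $q$-binomial coefficient \citep{koekoek:book:2010,kac:book:2002}, which in turn is verified in one line by bringing the right-hand side over the common denominator $\prod_{n=0}^{k-1}(1-\varsigma_\mathrm{P}^{n+1})$.

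I expect the $q$-Pascal reduction to be the only step needing genuine care; everything else is bookkeeping of the powers of $\mu_\mathrm{P}$, $\nu_\mathrm{P}$ and $\varsigma_\mathrm{P}$. A route that avoids invoking a $q$-identity would be to solve the first-order linear recurrence \eqref{eq:x0k1pre} in $t$ directly by the summation-factor method, giving $x_{0,k}(t)=\mu_\mathrm{P}\varsigma_\mathrm{P}^{k-1}\sum_{m=1}^{t}(\nu_\mathrm{P}\varsigma_\mathrm{P}^{k})^{t-m}x_{0,k-1}(m-1)$, and then substituting the formula for $x_{0,k-1}$ and summing the resulting $q$-binomials; but that sum is closed precisely by the same $q$-Pascal/telescoping computation, so the verification-plus-uniqueness argument above is the more economical write-up.
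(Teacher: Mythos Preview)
Your proposal is correct and follows essentially the same route as the paper: verify that the candidate formula satisfies the recurrence \eqref{eq:x0k1pre}, reduce the verification to the $q$-Pascal identity for $q$-binomial coefficients, and invoke uniqueness of the solution. You are in fact slightly more careful than the paper's own argument, since you explicitly check the initial data $Y_k(t)=0$ for $0\le t<k$ and the base case $k=0$, whereas the paper only remarks in passing that ``solutions for recursive functions are always unique'' without spelling out the boundary conditions.
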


\begin{proof}
If \eqref{eq:x0k1} is a solution, then it must satisfy \eqref{eq:x0k1pre}. Since solutions for recursive functions are always unique, \eqref{eq:x0k1} would be the only solution. Hence, we proof \eqref{eq:x0k1} by inserting the equation on the right hand side of \eqref{eq:x0k1pre}.

\begin{align}
\begin{split}
 & \nu_{\mathrm{P}}\varsigma_\mathrm{P}^{k} x_{0,k}(t-1) + \mu_\mathrm{P}\varsigma_\mathrm{P}^{k-1}x_{0,k-1}(t-1)\\
 &\quad = \nu_{\mathrm{P}}\varsigma_\mathrm{P}^{k} N\mu_\mathrm{P}^k \nu_{\mathrm{P}}^{t-k-1}\varsigma_\mathrm{P}^{k(k-1)/2} \prod_{n=0}^{k-1}\frac{1-\varsigma_\mathrm{P}^{t-n-1}}{1-\varsigma_\mathrm{P}^{n+1}} + \mu_\mathrm{P}\varsigma_\mathrm{P}^{k-1}N\mu_\mathrm{P}^{k-1} \nu_{\mathrm{P}}^{t-k}\varsigma_\mathrm{P}^{(k-1)(k-2)/2} \prod_{n=0}^{k-2}\frac{1-\varsigma_\mathrm{P}^{t-n-1}}{1-\varsigma_\mathrm{P}^{n+1}} \\
 & \quad = N\mu_\mathrm{P}^k \nu_{\mathrm{P}}^{t-k}\varsigma_\mathrm{P}^{k(k-1)/2} \left( \varsigma_\mathrm{P}^{k} \prod_{n=0}^{k-1}\frac{1-\varsigma_\mathrm{P}^{t-n-1}}{1-\varsigma_\mathrm{P}^{n+1}} + \prod_{n=0}^{k-2}\frac{1-\varsigma_\mathrm{P}^{t-n-1}}{1-\varsigma_\mathrm{P}^{n+1}} \right).
\end{split}
\label{eq:x0kpr}
\end{align}
We can write each product as a $q$-binomial coefficient, $\prod_{n=0}^{k-1}\frac{1-\varsigma_\mathrm{P}^{t-n}}{1-\varsigma_\mathrm{P}^{n+1}} = \begin{bmatrix} t\\k \end{bmatrix}_{\varsigma_\mathrm{P}}$. Thus, with the $q$-Pascal rule \citep{kac:book:2002}
\begin{align}
 \varsigma_\mathrm{P}^{k}\begin{bmatrix} t-1\\k \end{bmatrix}_{\varsigma_\mathrm{P}} +\begin{bmatrix} t-1\\k-1 \end{bmatrix}_{\varsigma_\mathrm{P}} = \begin{bmatrix} t\\k \end{bmatrix}_{\varsigma_\mathrm{P}}
\end{align}
Equation \eqref{eq:x0kpr} simplifies to
\begin{align}
\nu_{\mathrm{P}}\varsigma_\mathrm{P}^{k} x_{0,k}(t-1) + \mu_\mathrm{P}\varsigma_\mathrm{P}^{k-1}x_{0,k-1}(t-1) = N\mu_\mathrm{P}^k \nu_{\mathrm{P}}^{t-k}\varsigma_\mathrm{P}^{k(k-1)/2} \prod_{n=0}^{k-1}\frac{1-\varsigma_\mathrm{P}^{t-n}}{1-\varsigma_\mathrm{P}^{n+1}} = x_{0,k}(t),
\end{align}
which concludes the proof.
\end{proof}

\section{\bf Analytic expression for the average number of cells with the primary driver mutation at generation $t$}\label{subsec:withDriver}

We now turn to the cells which have obtained the primary driver mutation. As discussed in the main text, we only look at the case where the fitness change of the secondary driver mutation is not equal to zero, $s_\mathrm{P}\neq 0$. While for cells without the primary driver mutation there is only one mutational pathway, cells with the primary driver mutation can be reached via different mutational pathways, because cells that get the primary driver mutation might have different amounts of secondary driver mutations. Hence, we need to sum over all those possible pathways. Let $p$ be the number of secondary drivers that are present in the cell which acquires the primary driver mutation. Then $x_{1,k}^{(p)}(t)$ denotes the number of cells with the primary driver mutation and $k$ secondary driver mutations, when the primary driver mutation has happened in a cell with $p$ secondary driver mutations ($0 \leq p \leq k$). With this, the total number of cells with the primary driver mutation is 
\begin{align}
 x_{1,k}(t) = \sum_{p=0}^k x_{1,k}^{(p)}(t).
\end{align}

The change in the number of cells now depends on $p$. We have
\begin{align}
 x_{1,k}^{(p)}(t) = \begin{cases} \nu_\mathrm{D}\varsigma_\mathrm{D}\varsigma_\mathrm{DP}^k x_{1,k}^{(p)}(t-1) + \mu_\mathrm{P} \varsigma_\mathrm{D}\varsigma_\mathrm{DP}^{k-1} x_{1,k-1}^{(p)}(t-1), \quad & \mbox{if } p<k \\
                                                         \nu_\mathrm{D}\varsigma_\mathrm{D}\varsigma_\mathrm{DP}^k x_{1,k}^{(p)}(t-1) + \mu_\mathrm{D} \varsigma_\mathrm{P}^{k} x_{0,k}(t-1), & \mbox{if } p=k. \end{cases}
 \label{eq:x1kppre}
\end{align}

The solution of \eqref{eq:x1kppre} is given by the following theorem:
\begin{thm} \label{thm:x1kp}
The average number of cells with the primary driver mutation and $k$ secondary driver mutations, given that the primary driver mutation happens in a cell with $p$ secondary driver mutations, is  given by
\begin{align}
\begin{split}
  x_{1,k}^{(p)}(t) =& N\mu_D\mu_P^k \varsigma_\mathrm{D}^{k-p} \varsigma_\mathrm{DP}^{(k(k-1)-p(p-1))/2} \frac{\varsigma_\mathrm{P}^{p(p+1)/2}}{\prod_{n=0}^{p-1}\left(1-\varsigma_\mathrm{P}^{n+1}\right)} \\
  \times & \left[ \nu_{\mathrm{P}}^{t-p} \Psi_{p,k}(t) - \sum_{j=p}^{k}\nu_{\mathrm{P}}^{j-p}\left(\nu_\mathrm{D}\varsigma_\mathrm{DP}^j\varsigma_\mathrm{D}\right)^{t-k} \Psi_{p,j}(j)\prod_{n=j}^{k-1}\frac{1-\varsigma_\mathrm{DP}^{t-n-1}}{1-\varsigma_\mathrm{DP}^{k-n}} \right],
\end{split}
  \label{driverexp}
\end{align}
where the function $\Psi$ is defined
\begin{align}\label{eq:psi}
 \Psi_{p,k}(t) = \sum_{r=0}^{p} \frac{\left( -\varsigma_\mathrm{P}^{t-p+1}\right)^r \varsigma_\mathrm{P}^{\frac{r(r-1)}{2}} \begin{bmatrix} p\\r \end{bmatrix}_{\varsigma_\mathrm{P}}}{\prod_{j=p}^{k}(\nu_\mathrm{P}\varsigma_\mathrm{P}^r-\nu_\mathrm{D}\varsigma_\mathrm{D}\varsigma_\mathrm{DP}^j)}.
\end{align}
\end{thm}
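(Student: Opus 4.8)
The plan is to verify \eqref{driverexp} by direct substitution into the recursion \eqref{eq:x1kppre}, using that a first-order recursion in $t$ with a prescribed initial value has a unique solution, and to organise this as an induction on $k$ for fixed $p$: in the base case $k=p$ the inhomogeneous term of \eqref{eq:x1kppre} is $\mu_\mathrm{D}\varsigma_\mathrm{P}^{k}x_{0,k}(t-1)$, which is given by Theorem \ref{thm:x0k1}, while in the step $p<k$ it is $\mu_\mathrm{P}\varsigma_\mathrm{D}\varsigma_\mathrm{DP}^{k-1}x_{1,k-1}^{(p)}(t-1)$, which is given by the induction hypothesis. First I would pin down the initial condition: a cell carrying the primary driver together with $k$ secondary drivers needs at least $k+1$ mutational events, so $x_{1,k}^{(p)}(t)=0$ for $t\le k$, and one checks that the right-hand side of \eqref{driverexp} indeed vanishes at $t=k$ — every term with $j<k$ in the inner sum then contains the vanishing factor $1-\varsigma_\mathrm{DP}^{t-n-1}$ at $n=k-1$, and the single surviving $j=k$ term cancels $\nu_\mathrm{P}^{t-p}\Psi_{p,k}(t)$. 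Together with \eqref{eq:x1kppre} this value determines $x_{1,k}^{(p)}(t)$ for all $t\ge k$, so it is enough to check that the closed form satisfies \eqref{eq:x1kppre}.

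For the base case $k=p$ the prefactor of \eqref{driverexp} collapses and the inner sum keeps only its $j=k$ term, so $x_{1,k}^{(k)}(t)$ is a ($t$-independent) prefactor times $\nu_\mathrm{P}^{t-k}\Psi_{k,k}(t)-(\nu_\mathrm{D}\varsigma_\mathrm{D}\varsigma_\mathrm{DP}^{k})^{t-k}\Psi_{k,k}(k)$. Forming $x_{1,k}^{(k)}(t)-\nu_\mathrm{D}\varsigma_\mathrm{D}\varsigma_\mathrm{DP}^{k}x_{1,k}^{(k)}(t-1)$, the two copies of $\Psi_{k,k}(k)$ cancel and only the prefactor times $\nu_\mathrm{P}^{t-1-k}\bigl(\nu_\mathrm{P}\Psi_{k,k}(t)-\nu_\mathrm{D}\varsigma_\mathrm{D}\varsigma_\mathrm{DP}^{k}\Psi_{k,k}(t-1)\bigr)$ remains; in this combination the single denominator $\nu_\mathrm{P}\varsigma_\mathrm{P}^{r}-\nu_\mathrm{D}\varsigma_\mathrm{D}\varsigma_\mathrm{DP}^{k}$ cancels term by term, leaving $\sum_{r=0}^{k}(-1)^{r}\varsigma_\mathrm{P}^{r(r-1)/2+r(t-k)}\begin{bmatrix} k\\r \end{bmatrix}_{\varsigma_\mathrm{P}}$, which by the $q$-binomial theorem \citep{kac:book:2002} equals $\prod_{m=0}^{k-1}(1-\varsigma_\mathrm{P}^{t-k+m})$. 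Matching the powers of $\varsigma_\mathrm{P}$ and the $(1-\varsigma_\mathrm{P}^{\cdot})$-products against the expression for $x_{0,k}(t-1)$ from Theorem \ref{thm:x0k1} then shows the result is exactly $\mu_\mathrm{D}\varsigma_\mathrm{P}^{k}x_{0,k}(t-1)$, as required.

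For the step $p<k$ I would substitute \eqref{driverexp} at index $k$ and the induction hypothesis at index $k-1$ into $x_{1,k}^{(p)}(t)-\nu_\mathrm{D}\varsigma_\mathrm{D}\varsigma_\mathrm{DP}^{k}x_{1,k}^{(p)}(t-1)$ and show it equals $\mu_\mathrm{P}\varsigma_\mathrm{D}\varsigma_\mathrm{DP}^{k-1}x_{1,k-1}^{(p)}(t-1)$. Three ingredients make this work. (i) A $\Psi$-cancellation identity, $\nu_\mathrm{P}\Psi_{p,k}(t)-\nu_\mathrm{D}\varsigma_\mathrm{D}\varsigma_\mathrm{DP}^{k}\Psi_{p,k}(t-1)=\Psi_{p,k-1}(t-1)$, obtained by factoring $(-\varsigma_\mathrm{P}^{t-p})^{r}$ out of each summand of \eqref{eq:psi} and cancelling the $j=k$ factor of its denominator product. (ii) An elementary product identity, a cousin of the $q$-Pascal rule \citep{kac:book:2002}: for $j\le k-1$, $\varsigma_\mathrm{DP}^{j}\prod_{n=j}^{k-1}(1-\varsigma_\mathrm{DP}^{t-n-1})-\varsigma_\mathrm{DP}^{k}\prod_{n=j}^{k-1}(1-\varsigma_\mathrm{DP}^{t-n-2})=\varsigma_\mathrm{DP}^{j}(1-\varsigma_\mathrm{DP}^{k-j})\prod_{n=j+1}^{k-1}(1-\varsigma_\mathrm{DP}^{t-n-1})$, while for $j=k$ the same combination is $0$ because both products are empty; this simultaneously drops the $j=k$ contribution, lowering the range of the pathway sum from $[p,k]$ to $[p,k-1]$, and converts the $q$-binomial-type products correctly. (iii) The prefactor of \eqref{driverexp} at index $k$ equals $\mu_\mathrm{P}\varsigma_\mathrm{D}\varsigma_\mathrm{DP}^{k-1}$ times the prefactor at index $k-1$, a one-line check on the exponent of $\varsigma_\mathrm{DP}$. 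Putting (i)--(iii) together, the $\Psi$-part of $x_{1,k}^{(p)}(t)-\nu_\mathrm{D}\varsigma_\mathrm{D}\varsigma_\mathrm{DP}^{k}x_{1,k}^{(p)}(t-1)$ becomes the $\Psi$-part of $\mu_\mathrm{P}\varsigma_\mathrm{D}\varsigma_\mathrm{DP}^{k-1}x_{1,k-1}^{(p)}(t-1)$ and the sum over $j$ matches term by term, closing the induction.

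The hard part is not any single one of these identities but the combined bookkeeping in the inductive step: one must track the nested $q$-binomial products $\prod_{n=j}^{k-1}\frac{1-\varsigma_\mathrm{DP}^{t-n-1}}{1-\varsigma_\mathrm{DP}^{k-n}}$ and how they shift under $t\mapsto t-1$ and under $k\mapsto k-1$ at the same time as the sum over the pathway index $j$, and then confirm that all boundary contributions — the $j=k$ term, the empty-product conventions, and the values $\Psi_{p,j}(j)$ taken at $t=j$ — cancel so that exactly the $(k-1)$-level expression re-emerges. Once the organisation is fixed, each individual simplification is short.
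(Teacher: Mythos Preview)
Your proposal is correct and follows essentially the same approach as the paper's proof: direct verification that the closed form satisfies the recursion \eqref{eq:x1kppre}, split into the two cases $p=k$ and $p<k$, using exactly the same two identities --- the $\Psi$-relation $\nu_\mathrm{P}\Psi_{p,k}(t)-\nu_\mathrm{D}\varsigma_\mathrm{D}\varsigma_\mathrm{DP}^{k}\Psi_{p,k}(t-1)=\Psi_{p,k-1}(t-1)$ (the paper states it in the equivalent additive form) and a $q$-Pascal-type product identity --- and the $q$-binomial theorem for the $p=k$ case. Your organisation as an induction on $k$ and your explicit check of the initial value $x_{1,k}^{(p)}(k)=0$ are minor additions that make the uniqueness argument cleaner, but the substance is the same.
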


\begin{proof} Again we proof the theorem by inserting \eqref{driverexp} in \eqref{eq:x1kppre} and showing that the equality holds true. We 
need to distinguish between the two cases as in \eqref{eq:x1kppre}. First, we proof the theorem for the case $p<k$.
\begin{align} \label{eq:x1kppr1}
\begin{split}
 & \nu_\mathrm{D}\varsigma_\mathrm{D}\varsigma_\mathrm{DP}^k x_{1,k}^{(p)}(t-1) + \mu_\mathrm{P}\varsigma_\mathrm{D}\varsigma_\mathrm{DP}^{k-1} x_{1,k-1}^{(p)}(t-1)\\ 
 & \quad = \nu_\mathrm{D}\varsigma_\mathrm{D}\varsigma_\mathrm{DP}^k N\mu_D\mu_P^k \varsigma_\mathrm{D}^{k-p} \varsigma_\mathrm{DP}^{(k(k-1)-p(p-1))/2} \frac{\varsigma_\mathrm{P}^{p(p+1)/2}}{\prod_{n=0}^{p-1}\left(1-\varsigma_\mathrm{P}^{n+1}\right)} \\
 & \quad \times \left[ \nu_{\mathrm{P}}^{t-p-1} \Psi_{p,k}(t-1) - \sum_{j=p}^{k}\nu_{\mathrm{P}}^{j-p}\left(\nu_\mathrm{D}\varsigma_\mathrm{DP}^j\varsigma_\mathrm{D}\right)^{t-k-1} \Psi_{p,j}(j)\prod_{n=j}^{k-1}\frac{1-\varsigma_\mathrm{DP}^{t-n-2}}{1-\varsigma_\mathrm{DP}^{k-n}} \right] \\
 & \quad + \mu_\mathrm{P}\varsigma_\mathrm{D}\varsigma_\mathrm{DP}^{k-1} N\mu_D\mu_P^{k-1} \varsigma_\mathrm{D}^{k-p-1} \varsigma_\mathrm{DP}^{((k-1)(k-2)-p(p-1))/2} \frac{\varsigma_\mathrm{P}^{p(p+1)/2}}{\prod_{n=0}^{p-1}\left(1-\varsigma_\mathrm{P}^{n+1}\right)} \\
 & \quad \times \left[ \nu_{\mathrm{P}}^{t-p-1} \Psi_{p,k-1}(t-1) - \sum_{j=p}^{k-1}\nu_{\mathrm{P}}^{j-p}\left(\nu_\mathrm{D}\varsigma_\mathrm{DP}^j\varsigma_\mathrm{D}\right)^{t-k} \Psi_{p,j}(j)\prod_{n=j}^{k-2}\frac{1-\varsigma_\mathrm{DP}^{t-n-2}}{1-\varsigma_\mathrm{DP}^{k-n-1}} \right] \\
 & \quad = N\mu_D\mu_P^k \varsigma_\mathrm{D}^{k-p} \varsigma_\mathrm{DP}^{(k(k-1)-p(p-1))/2} \frac{\varsigma_\mathrm{P}^{p(p+1)/2}}{\prod_{n=0}^{p-1}\left(1-\varsigma_\mathrm{P}^{n+1}\right)} \\ 
 & \quad \times \left[ \nu_\mathrm{P}^{t-p-1}\left( \nu_\mathrm{D}\varsigma_\mathrm{D}\varsigma_\mathrm{DP}^k \Psi_{p,k}(t-1) + \Psi_{p,k-1}(t-1) \right)  - \nu_\mathrm{P}^{k-p}(\nu_\mathrm{D}\varsigma_\mathrm{D}\varsigma_\mathrm{DP}^k)^{t-k}\Psi_{p,k}(k) \right.\\
 & \quad - \left. \sum_{j=p}^{k-1} \nu_\mathrm{P}^{j-p} \Psi_{p,j}(j) \left(\nu_\mathrm{D}\varsigma_\mathrm{D}\varsigma_\mathrm{DP}^j \right)^{t-k}\left( \varsigma_\mathrm{DP}^{k-j}\prod_{n=j}^{k-1}\frac{1-\varsigma_\mathrm{DP}^{t-n-2}}{1-\varsigma_\mathrm{DP}^{k-n}}+\prod_{n=j}^{k-2}\frac{1-\varsigma_\mathrm{DP}^{t-n-2}}{1-\varsigma_\mathrm{DP}^{k-n-1}} \right) \right]
 \end{split}
\end{align}
When we compare \eqref{driverexp} and \eqref{eq:x1kppr1} we see, that the two equations are equal if 
\begin{align}
 \nu_\mathrm{D}\varsigma_\mathrm{D}\varsigma_\mathrm{DP}^k \Psi_{p,k}(t-1) + \Psi_{p,k-1}(t-1) = \nu_{\mathrm{P}} \Psi_{p,k}(t)
 \label{eq:psipr}
\end{align}
and
\begin{align}\label{eq:s}
 \varsigma_\mathrm{DP}^{k-j}\prod_{n=j}^{k-1}\frac{1-\varsigma_\mathrm{DP}^{t-n-2}}{1-\varsigma_\mathrm{DP}^{k-n}}+\prod_{n=j}^{k-2}\frac{1-\varsigma_\mathrm{DP}^{t-n-2}}{1-\varsigma_\mathrm{DP}^{k-n-1}} = \prod_{n=j}^{k-1}\frac{1-\varsigma_\mathrm{DP}^{t-n-1}}{1-\varsigma_\mathrm{DP}^{k-n}}
\end{align}
For Equation \eqref{eq:s}, we have
\begin{align}
\nonumber
 & & \varsigma_\mathrm{DP}^{k-j}\prod_{n=j}^{k-1}\frac{1-\varsigma_\mathrm{DP}^{t-n-2}}{1-\varsigma_\mathrm{DP}^{k-n}}+\prod_{n=j}^{k-2}\frac{1-\varsigma_\mathrm{DP}^{t-n-2}}{1-\varsigma_\mathrm{DP}^{k-n-1}} & = \prod_{n=j}^{k-1}\frac{1-\varsigma_\mathrm{DP}^{t-n-1}}{1-\varsigma_\mathrm{DP}^{k-n}} \\
 &\Leftrightarrow & \varsigma_\mathrm{DP}^{k-j} \prod_{n=j}^{k-1} \left( 1-\varsigma_\mathrm{DP}^{t-n-2} \right) + \left(1-\varsigma_\mathrm{DP}^{k-j}\right) \prod_{n=j}^{k-2}\left( 1-\varsigma_\mathrm{DP}^{t-n-2} \right) & = \prod_{n=j}^{k-1} \left( 1-\varsigma_\mathrm{DP}^{t-n-1} \right) \\
 \nonumber
 &\Leftrightarrow & \left( 1-\varsigma_\mathrm{DP}^{k-j} + \varsigma_\mathrm{DP}^{k-j}\left( 1-\varsigma_\mathrm{DP}^{t-k-1} \right) \right) \prod_{n=j}^{k-2}\left( 1-\varsigma_\mathrm{DP}^{t-n-2} \right) & = \prod_{n=j}^{k-1} \left( 1-\varsigma_\mathrm{DP}^{t-n-1} \right) \\
 \nonumber
 &\Leftrightarrow & \prod_{n=j}^{k-1} \left( 1-\varsigma_\mathrm{DP}^{t-n-1} \right) & = \prod_{n=j}^{k-1} \left( 1-\varsigma_\mathrm{DP}^{t-n-1} \right).
\end{align}
For Equation \eqref{eq:psipr}, we need to insert the definition of $\Psi$
\begin{align}
 & \nu_\mathrm{D}\varsigma_\mathrm{D}\varsigma_\mathrm{DP}^k \Psi_{p,k}(t-1) + \Psi_{p,k-1}(t-1) \\
 & \quad = \sum_{r=0}^{p} \left(-\varsigma_\mathrm{P}^{t-p}\right)^r \varsigma_\mathrm{P}^{r(r-1)/2} \begin{bmatrix} p\\r\end{bmatrix}_{\varsigma_\mathrm{P}} \left( \frac{\nu_\mathrm{D}\varsigma_\mathrm{D}\varsigma_\mathrm{DP}^k}{\prod_{j=p}^{k} \left( \nu_\mathrm{P}\varsigma_\mathrm{P}^r - \nu_\mathrm{D}\varsigma_\mathrm{D}\varsigma_\mathrm{DP}^j \right)} + \frac{1}{\prod_{j=p}^{k-1} \left( \nu_\mathrm{P}\varsigma_\mathrm{P}^r - \nu_\mathrm{D}\varsigma_\mathrm{D}\varsigma_\mathrm{DP}^j \right)} \right) \\
 & \quad = \sum_{r=0}^{p} \left(-\varsigma_\mathrm{P}^{t-p}\right)^r \varsigma_\mathrm{P}^{r(r-1)/2} \begin{bmatrix} p\\r\end{bmatrix}_{\varsigma_\mathrm{P}} \left( \frac{ \nu_\mathrm{D}\varsigma_\mathrm{D}\varsigma_\mathrm{DP}^k + \nu_\mathrm{P}\varsigma_\mathrm{P}^r - \nu_\mathrm{D}\varsigma_\mathrm{D}\varsigma_\mathrm{DP}^k}{\prod_{j=p}^{k} \left( \nu_\mathrm{P}\varsigma_\mathrm{P}^r - \nu_\mathrm{D}\varsigma_\mathrm{D}\varsigma_\mathrm{DP}^j \right)} \right) \\
 & \quad = \nu_\mathrm{P} \sum_{r=0}^{p} \left(-\varsigma_\mathrm{P}^{t-p+1}\right)^r \varsigma_\mathrm{P}^{r(r-1)/2} \begin{bmatrix} p\\r\end{bmatrix}_{\varsigma_\mathrm{P}} \\
 & \quad = \nu_\mathrm{P} \Psi_{p,k}(t).
\end{align}
This concludes the proof for the case $p<k$. Now we look at the case $p=k$. We have
\begin{align}
 & \nu_\mathrm{D}\varsigma_\mathrm{D}\varsigma_\mathrm{DP}^k x_{1,k}^{(k)}(t-1) + \mu_\mathrm{D} \varsigma_\mathrm{P}^{k} x_{0,k}(t-1)\\
 & \quad = \nu_\mathrm{D}\varsigma_\mathrm{D}\varsigma_\mathrm{DP}^k N \mu_\mathrm{D}\mu_\mathrm{P}^k \frac{\varsigma_\mathrm{P}^{k(k+1)/2}}{\prod_{n=0}^{k-1}(1-\varsigma_\mathrm{P}^{n+1})} \left[ \nu_\mathrm{P}^{t-k-1} \Psi_{k,k}(t-1) - \left(\nu_\mathrm{D}\varsigma_\mathrm{D}\varsigma_\mathrm{DP}^k\right)^{t-k-1} \Psi_{k,k}(k) \right]\\
 & \quad + \mu_\mathrm{D}\varsigma_\mathrm{P}^k N\mu_\mathrm{P}^k \nu_\mathrm{P}^{t-k-1} \varsigma_\mathrm{P}^{k(k-1)/2} \prod_{n=0}^{k-1}\frac{1-\varsigma_\mathrm{P}^{t-n-1}}{1-\varsigma_\mathrm{P}^{n+1}}\\
 & \quad = N \mu_\mathrm{D}\mu_\mathrm{P}^k \frac{\varsigma_\mathrm{P}^{k(k+1)/2}}{\prod_{n=0}^{k-1}(1-\varsigma_\mathrm{P}^{n+1})}\left[ \nu_\mathrm{D}\varsigma_\mathrm{D}\varsigma_\mathrm{DP}^k\nu_\mathrm{P}^{t-k-1} \Psi_{k,k}(t-1) - \left(\nu_\mathrm{D}\varsigma_\mathrm{D}\varsigma_\mathrm{DP}^k\right)^{t-k} \Psi_{k,k}(k) + \nu_\mathrm{P}^{t-k-1}\prod_{n=0}^{k-1}\left( 1-\varsigma_\mathrm{P}^{t-n-1} \right) \right].
\end{align}
In order for this to be equal to $x_{1,k}^{(k)}$, we need
\begin{align}
 \nu_\mathrm{D}\varsigma_\mathrm{D}\varsigma_\mathrm{DP}^k \Psi_{k,k}(t-1)+ \prod_{n=0}^{k-1}\left( 1-\varsigma_\mathrm{P}^{t-n-1} \right) = \nu_\mathrm{P} \Psi_{k,k}(t).
\end{align}
Analogue to \eqref{eq:psipr} this equation holds true if 
\begin{align}
  \prod_{n=0}^{k-1}\left( 1-\varsigma_\mathrm{P}^{t-n-1} \right)=\Psi_{k-1,k}(t-1)=\sum_{r=0}^{k} \left(-\varsigma_\mathrm{P}^{t-k}\right)^r \varsigma_\mathrm{P}^{r(r-1)/2} \begin{bmatrix} p\\r \end{bmatrix}_{\varsigma_\mathrm{P}}.
\end{align}
By writing the summation as a $q$-Pochhammer symbol, we have
\begin{align}
 \sum_{r=0}^{k} \left(-\varsigma_\mathrm{P}^{t-k}\right)^r \varsigma_\mathrm{P}^{r(r-1)/2} \begin{bmatrix} p\\r \end{bmatrix}_{\varsigma_\mathrm{P}} = \left(\varsigma_\mathrm{P}^{t-k}; \varsigma_\mathrm{P}\right)_k = \prod_{n=0}^{k-1}\left(1-\varsigma_\mathrm{P}^{t-k+n} \right) = \prod_{n=0}^{k-1}\left( 1-\varsigma_\mathrm{P}^{t-n-1} \right).
\end{align}
This concludes the proof also for $p=k$.

\end{proof}

\section{\bf Intuitive Description of Equation (11)}\label{sec:Intuitive}
Here, we try to understand this equation in a more intuitive way. 
For each generation $t$, the number of possibilities to distribute the $p$ secondary driver mutations over $t$ time steps is given by the $q$-binomial coefficient $\begin{bmatrix}t\\p\end{bmatrix}_{\varsigma_\mathrm{P}}$. But the growth of the cells depends on the time when the secondary driver mutations are first acquired. 
Due to fitness advantage, the earlier the mutations have been acquired, the faster the population grows, and also the sooner the primary driver mutation can be obtained. 
As in Equation (5), the effect of the fitness advantage on the cells without the primary driver mutation itself is captured by multiplying $\varsigma_\mathrm{P}^{p(p+1)/2}$. 
The effect on the primary driver mutation is more intricate.
To capture this effect, we start from a $q$-binomial coefficient and rewrite the $q$-Pochhammer symbol in the numerator
$\prod_{j=0}^{p-1} 1-\varsigma_\mathrm{P}^{t-j}$ in terms of a sum \citep{koekoek:book:2010},
\begin{align}\label{eq:qbinomsum}
 \begin{bmatrix}t\\p\end{bmatrix}_{\varsigma_\mathrm{P}} 
 = \prod_{j=0}^{p-1} \frac{1-\varsigma_\mathrm{P}^{t-j}}{1-\varsigma_\mathrm{P}^{j+1}}
 = \frac{\sum_{r=0}^{p} \left( -\varsigma_\mathrm{P}^t\right)^r\left( 1/\varsigma_\mathrm{P}\right)^{\frac{r(r-1)}{2}} \begin{bmatrix} p\\r \end{bmatrix}_{1/\varsigma_\mathrm{P}}}{\prod_{j=0}^{p-1}(1-\varsigma_\mathrm{P}^{j+1})}.
\end{align}
To make this resemble the term in the parentheses in the second line of Equation (11), we divide the numerator by
$\prod_{j=p}^{k}(\nu_\mathrm{P}\varsigma_\mathrm{P}^r-\nu_\mathrm{D}\varsigma_\mathrm{D}\varsigma_\mathrm{DP}^j)$
and we obtain
\begin{align}\label{eq:a}
  \frac{\sum_{r=0}^{p} \frac{1}{\prod_{j=p}^{k}(\nu_\mathrm{P}\varsigma_\mathrm{P}^r-\nu_\mathrm{D}\varsigma_\mathrm{D}\varsigma_\mathrm{DP}^j)}\left( -\varsigma_\mathrm{P}^t\right)^r\left( 1/\varsigma_\mathrm{P}\right)^{\frac{r(r-1)}{2}} \begin{bmatrix} p\\r \end{bmatrix}_{1/\varsigma_\mathrm{P}}}{\prod_{j=0}^{p-1}(1-\varsigma_\mathrm{P}^{j+1})}.
\end{align}
With 
\begin{align}
 \begin{bmatrix} p\\r \end{bmatrix}_{\varsigma_\mathrm{P}}= \prod_{j=0}^{r-1}\frac{1-\varsigma_\mathrm{P}^{p-j}}{1-\varsigma_\mathrm{P}^{j+1}}=\prod_{j=0}^{r-1}\frac{\varsigma_\mathrm{P}^{p-2j-1}(1-1/\varsigma_\mathrm{P}^{p-j})}{1-1/\varsigma_\mathrm{P}^{j+1}}=\varsigma_\mathrm{P}^{r(p-r)}\prod_{j=0}^{r-1}\frac{1-1/\varsigma_\mathrm{P}^{p-j}}{1-1/\varsigma_\mathrm{P}^{j+1}}=\varsigma_\mathrm{P}^{r(p-r)}\begin{bmatrix} p\\r \end{bmatrix}_{1/\varsigma_\mathrm{P}}
\end{align}
Equation \eqref{eq:a} can be written as
\begin{align}
 \frac{\sum_{r=0}^{p} \frac{1}{\prod_{j=p}^{k}(\nu_\mathrm{P}\varsigma_\mathrm{P}^r-\nu_\mathrm{D}\varsigma_\mathrm{D}\varsigma_\mathrm{DP}^j)}\left( -\varsigma_\mathrm{P}^{t-p+1}\right)^r \varsigma_\mathrm{P}^{\frac{r(r-1)}{2}} \begin{bmatrix} p\\r \end{bmatrix}_{\varsigma_\mathrm{P}}}{\prod_{j=0}^{p-1}(1-\varsigma_\mathrm{P}^{j+1})}.
\end{align}
For the numerator of this modified $q$-binomial coefficient, we introduce the abbreviation 
\begin{align}\label{eq:psi}
 \Psi_{p,k}(t) = \sum_{r=0}^{p} \frac{\left( -\varsigma_\mathrm{P}^{t-p+1}\right)^r \varsigma_\mathrm{P}^{\frac{r(r-1)}{2}} \begin{bmatrix} p\\r \end{bmatrix}_{\varsigma_\mathrm{P}}}{\prod_{j=p}^{k}(\nu_\mathrm{P}\varsigma_\mathrm{P}^r-\nu_\mathrm{D}\varsigma_\mathrm{D}\varsigma_\mathrm{DP}^j)}.
\end{align}
In terms of this $\Psi$-function Equation (11) can be written in a more compact form as
\begin{align}
\begin{split}
 x_{1,k}^{}(t) =& N\sum_{p=0	}^{k}\mu_D\mu_P^k \varsigma_\mathrm{D}^{k-p} \varsigma_\mathrm{DP}^{(k(k-1)-p(p-1))/2} \frac{\varsigma_\mathrm{P}^{p(p+1)/2}}{\prod_{j=0}^{p-1}(1-\varsigma_\mathrm{P}^{j+1})} 
 \\
  \times & \left[ \nu_\mathrm{P}^{t-p} \Psi_{p,k}(t) - \sum_{j=p}^{k}\nu_\mathrm{P}^{j-p}(\nu_\mathrm{D} \varsigma_\mathrm{D} \varsigma_\mathrm{DP}^j
  )^{t-k} \Psi_{p,j}(j)\prod_{m=j}^{k-1}\frac{1-\varsigma_\mathrm{DP}^{t-m-1}}{1-\varsigma_\mathrm{DP}^{k-m}} \right].
 \end{split}
\end{align}


\begin{thebibliography}{61}
\expandafter\ifx\csname natexlab\endcsname\relax\def\natexlab#1{#1}\fi
\expandafter\ifx\csname url\endcsname\relax
  \def\url#1{\texttt{#1}}\fi
\expandafter\ifx\csname urlprefix\endcsname\relax\def\urlprefix{URL }\fi

\bibitem[{Alexandrov et~al.(2013)Alexandrov, Nik-Zainal, Wedge, Aparicio,
  Behjati, Biankin, Bignell, Bolli, Borg, B{\o}rresen-Dale, Boyault, Burkhardt,
  Butler, Caldas, Davies, Desmedt, Eils, Eyfj{\"o}rd, Foekens, Greaves, Hosoda,
  Hutter, Ilicic, Imbeaud, Imielinski, Imielinsk, J{\"a}ger, Jones, Jones,
  Knappskog, Kool, Lakhani, L{\'o}pez-Ot{\'\i}n, Martin, Munshi, Nakamura,
  Northcott, Pajic, Papaemmanuil, Paradiso, Pearson, Puente, Raine,
  Ramakrishna, Richardson, Richter, Rosenstiel, Schlesner, Schumacher, Span,
  Teague, Totoki, Tutt, Vald{\'e}s-Mas, van Buuren, van~'t Veer,
  Vincent-Salomon, Waddell, Yates, {Australian Pancreatic Cancer Genome
  Initiative}, {ICGC Breast Cancer Consortium}, {ICGC MMML-Seq Consortium},
  {ICGC PedBrain}, Zucman-Rossi, Futreal, McDermott, Lichter, Meyerson,
  Grimmond, Siebert, Campo, Shibata, Pfister, Campbell, and
  Stratton}]{alexandrov:Nature:2013}
Alexandrov, L.~B., Nik-Zainal, S., Wedge, D.~C., Aparicio, S. A. J.~R.,
  Behjati, S., Biankin, A.~V., Bignell, G.~R., Bolli, N., Borg, A.,
  B{\o}rresen-Dale, A.-L., Boyault, S., Burkhardt, B., Butler, A.~P., Caldas,
  C., Davies, H.~R., Desmedt, C., Eils, R., Eyfj{\"o}rd, J.~E., Foekens, J.~A.,
  Greaves, M., Hosoda, F., Hutter, B., Ilicic, T., Imbeaud, S., Imielinski, M.,
  Imielinsk, M., J{\"a}ger, N., Jones, D. T.~W., Jones, D., Knappskog, S.,
  Kool, M., Lakhani, S.~R., L{\'o}pez-Ot{\'\i}n, C., Martin, S., Munshi, N.~C.,
  Nakamura, H., Northcott, P.~A., Pajic, M., Papaemmanuil, E., Paradiso, A.,
  Pearson, J.~V., Puente, X.~S., Raine, K., Ramakrishna, M., Richardson, A.~L.,
  Richter, J., Rosenstiel, P., Schlesner, M., Schumacher, T.~N., Span, P.~N.,
  Teague, J.~W., Totoki, Y., Tutt, A. N.~J., Vald{\'e}s-Mas, R., van Buuren,
  M.~M., van~'t Veer, L., Vincent-Salomon, A., Waddell, N., Yates, L.~R.,
  {Australian Pancreatic Cancer Genome Initiative}, {ICGC Breast Cancer
  Consortium}, {ICGC MMML-Seq Consortium}, {ICGC PedBrain}, Zucman-Rossi, J.,
  Futreal, P.~A., McDermott, U., Lichter, P., Meyerson, M., Grimmond, S.~M.,
  Siebert, R., Campo, E., Shibata, T., Pfister, S.~M., Campbell, P.~J.,
  Stratton, M.~R., 2013. Signatures of mutational processes in human cancer.
  Nature 500, 415--421.

\bibitem[{Allday(2009)}]{allday:SCB:2009}
Allday, M.~J., Dec 2009. How does epstein-barr virus (ebv) complement the
  activation of myc in the pathogenesis of burkitt's lymphoma? Semin Cancer
  Biol 19, 366--376.

\bibitem[{Antal and Krapivsky(2011)}]{antal:JSM:2011}
Antal, T., Krapivsky, P., 2011. Exact solution of a two-type branching process:
  models of tumor progression. Journal of Statistical Mechanics: Theory and
  Experiment 2011, P08018.

\bibitem[{Armitage and Doll(1954)}]{armitage:BJC:1954}
Armitage, P., Doll, R., 1954. The age distribution of cancer and a multi-stage
  theory of carcinogenesis. British Journal of Cancer 8, 1--12.

\bibitem[{Attolini and Michor(2009)}]{attolini:ANAS:2009}
Attolini, C. S.-O., Michor, F., 2009. Evolutionary theory of cancer. Annals of
  the New York Academy of Sciences 1168, 23--51.

\bibitem[{Barrick and Lenski(2013)}]{Barrick:NRG:2013}
Barrick, J.~E., Lenski, R.~E., Dec 2013. Genome dynamics during experimental
  evolution. Nature Reviews Genetics 14~(12), 827--39.

\bibitem[{Beerenwinkel et~al.(2007)Beerenwinkel, Antal, Dingli, Traulsen,
  Kinzler, Velculescu, Vogelstein, and Nowak}]{beerenwinkel:PlosCB:2007}
Beerenwinkel, N., Antal, T., Dingli, D., Traulsen, A., Kinzler, K.~W.,
  Velculescu, V.~E., Vogelstein, B., Nowak, M.~A., 2007. Genetic progression
  and the waiting time to cancer. PLoS Computational Biology 3, e225.

\bibitem[{Birch et~al.(1998)Birch, Blair, Kelsey, Evans, Harris, Tricker, and
  Varley}]{birch:ONC:1998}
Birch, J.~M., Blair, V., Kelsey, A.~M., Evans, D.~G., Harris, M., Tricker,
  K.~J., Varley, J.~M., Sep 1998. Cancer phenotype correlates with
  constitutional tp53 genotype in families with the li-fraumeni syndrome.
  Oncogene 17~(9), 1061--8.

\bibitem[{Bozic et~al.(2010)Bozic, Antal, Ohtsuki, Carter, Kim, Chen, Karchin,
  Kinzler, Vogelstein, and Nowak}]{bozic:PNAS:2010}
Bozic, I., Antal, T., Ohtsuki, H., Carter, H., Kim, D., Chen, S., Karchin, R.,
  Kinzler, K.~W., Vogelstein, B., Nowak, M.~A., 2010. Accumulation of driver
  and passenger mutations during tumor progression. Proceedings of the National
  Academy of Sciences USA 107, 18545--18550.

\bibitem[{Campo(2012)}]{campo:NG:2012}
Campo, E., 2012. New pathogenic mechanisms in burkitt lymphoma. Nat Genet
  44~(12), 1288--1289.

\bibitem[{Datta et~al.(2013)Datta, Gutteridge, Swanton, Maley, and
  Graham}]{datta:EA:2013}
Datta, R.~S., Gutteridge, A., Swanton, C., Maley, C.~C., Graham, T.~A., 2013.
  Modelling the evolution of genetic instability during tumour progression.
  Evolutionary Applications 6, 20--33.

\bibitem[{de~Visser et~al.(2011)de~Visser, Cooper, and
  Elena}]{visser:PRSB:2011}
de~Visser, J. A. G.~M., Cooper, T.~F., Elena, S.~F., Dec 2011. The causes of
  epistasis. Proc Biol Sci 278~(1725), 3617--24.

\bibitem[{Durrett et~al.(2010)Durrett, Foo, Leder, Mayberry, and
  Michor}]{durrett:TPB:2010}
Durrett, R., Foo, J., Leder, K., Mayberry, J., Michor, F., Aug 2010.
  Evolutionary dynamics of tumor progression with random fitness values.
  Theoretical Population Biology 78~(1), 54--66.

\bibitem[{Durrett et~al.(2009)Durrett, Schmidt, and
  Schweinsberg}]{durrett:AnAP:2009}
Durrett, R., Schmidt, D., Schweinsberg, J., 2009. A waiting time problem
  arising from the study of multi-stage carcinogenesis. The Annals of Applied
  Probability, 676--718.

\bibitem[{Elgendy et~al.(2011)Elgendy, Sheridan, Brumatti, and
  Martin}]{elgendy:MC:2011}
Elgendy, M., Sheridan, C., Brumatti, G., Martin, S.~J., 2011. Oncogenic
  ras-induced expression of noxa and beclin-1 promotes autophagic cell death
  and limits clonogenic survival. Mol Cell~(1), 23--35.

\bibitem[{Franke et~al.(2011)Franke, Kl{\"o}zer, de~Visser, and
  Krug}]{franke:PLosCB:2011}
Franke, J., Kl{\"o}zer, A., de~Visser, J. A. G.~M., Krug, J., 2011.
  Evolutionary accessibility of mutational pathways. PLoS Computational Biology
  7, e1002134.

\bibitem[{Fr{\"o}hling et~al.(2007)Fr{\"o}hling, Scholl, Levine, Loriaux,
  Boggon, Bernard, Berger, D{\"o}hner, D{\"o}hner, Ebert, Teckie, Golub, Jiang,
  Schittenhelm, Lee, Griffin, Stone, Heinrich, Deininger, Druker, and
  Gilliland}]{frohling:CC:2007}
Fr{\"o}hling, S., Scholl, C., Levine, R.~L., Loriaux, M., Boggon, T.~J.,
  Bernard, O.~A., Berger, R., D{\"o}hner, H., D{\"o}hner, K., Ebert, B.~L.,
  Teckie, S., Golub, T.~R., Jiang, J., Schittenhelm, M.~M., Lee, B.~H.,
  Griffin, J.~D., Stone, R.~M., Heinrich, M.~C., Deininger, M.~W., Druker,
  B.~J., Gilliland, D.~G., 2007. Identification of driver and passenger
  mutations of flt3 by high-throughput dna sequence analysis and functional
  assessment of candidate alleles. Cancer Cell 12, 501--513.

\bibitem[{Futreal(2007)}]{futreal:CC:2007}
Futreal, P.~A., 2007. Backseat drivers take the wheel. Cancer Cell 12,
  493--494.

\bibitem[{Gerstung and Beerenwinkel(2010)}]{gerstung:MPS:2010}
Gerstung, M., Beerenwinkel, N., 2010. Waiting time models of cancer
  progression. Mathematical Population Studies 17, 115--135.

\bibitem[{Greenman et~al.(2007)Greenman, Stephens, Smith, Dalgliesh, Hunter,
  Bignell, Davies, Teague, Butler, Stevens, Edkins, O'Meara, Vastrik, Schmidt,
  Avis, Barthorpe, Bhamra, Buck, Choudhury, Clements, Cole, Dicks, Forbes,
  Gray, Halliday, Harrison, Hills, Hinton, Jenkinson, Jones, Menzies,
  Mironenko, Perry, Raine, Richardson, Shepherd, Small, Tofts, Varian, Webb,
  West, Widaa, Yates, Cahill, Louis, Goldstraw, Nicholson, Brasseur, Looijenga,
  Weber, Chiew, DeFazio, Greaves, Green, Campbell, Birney, Easton,
  Chenevix-Trench, Tan, Khoo, Teh, Yuen, Leung, Wooster, Futreal, and
  Stratton}]{greenman:Nature:2007}
Greenman, C., Stephens, P., Smith, R., Dalgliesh, G.~L., Hunter, C., Bignell,
  G., Davies, H., Teague, J., Butler, A., Stevens, C., Edkins, S., O'Meara, S.,
  Vastrik, I., Schmidt, E.~E., Avis, T., Barthorpe, S., Bhamra, G., Buck, G.,
  Choudhury, B., Clements, J., Cole, J., Dicks, E., Forbes, S., Gray, K.,
  Halliday, K., Harrison, R., Hills, K., Hinton, J., Jenkinson, A., Jones, D.,
  Menzies, A., Mironenko, T., Perry, J., Raine, K., Richardson, D., Shepherd,
  R., Small, A., Tofts, C., Varian, J., Webb, T., West, S., Widaa, S., Yates,
  A., Cahill, D.~P., Louis, D.~N., Goldstraw, P., Nicholson, A.~G., Brasseur,
  F., Looijenga, L., Weber, B.~L., Chiew, Y.-E., DeFazio, A., Greaves, M.~F.,
  Green, A.~R., Campbell, P., Birney, E., Easton, D.~F., Chenevix-Trench, G.,
  Tan, M.-H., Khoo, S.~K., Teh, B.~T., Yuen, S.~T., Leung, S.~Y., Wooster, R.,
  Futreal, P.~A., Stratton, M.~R., Mar 2007. Patterns of somatic mutation in
  human cancer genomes. Nature 446, 153--8.

\bibitem[{Haccou et~al.(2005)Haccou, Jagers, and Vatutin}]{haccou:book:2005}
Haccou, P., Jagers, P., Vatutin, V.~A., 2005. Branching processes: variation,
  growth, and extinction of populations. Vol.~5. Cambridge University Press,
  Cambridge.

\bibitem[{Hanahan and Weinberg(2000)}]{hanahan:Cell:2000}
Hanahan, D., Weinberg, R., 2000. The hallmarks of cancer. Cell 100, 57--70.

\bibitem[{Hoffman and Liebermann(2008)}]{hoffman:Oncogene:2008}
Hoffman, B., Liebermann, D.~A., 2008. Apoptotic signaling by c-myc. Oncogene
  27, 6462--6472.

\bibitem[{Hummel et~al.(2006)Hummel, Bentink, Berger, Klapper, Wessendorf,
  Barth, Bernd, Cogliatti, Dierlamm, Feller, Hansmann, Haralambieva, Harder,
  Hasenclever, K{\"u}hn, Lenze, Lichter, Martin-Subero, M{\"o}ller,
  M{\"u}ller-Hermelink, Ott, Parwaresch, Pott, Rosenwald, Rosolowski,
  Schwaenen, St{\"u}rzenhofecker, Szczepanowski, Trautmann, Wacker, Spang,
  Loeffler, Tr{\"u}mper, Stein, Siebert, and {Molecular Mechanisms in Malignant
  Lymphomas Network Project of the Deutsche Krebshilfe}}]{hummel:NEJM:2006}
Hummel, M., Bentink, S., Berger, H., Klapper, W., Wessendorf, S., Barth, T.
  F.~E., Bernd, H.-W., Cogliatti, S.~B., Dierlamm, J., Feller, A.~C., Hansmann,
  M.-L., Haralambieva, E., Harder, L., Hasenclever, D., K{\"u}hn, M., Lenze,
  D., Lichter, P., Martin-Subero, J.~I., M{\"o}ller, P., M{\"u}ller-Hermelink,
  H.-K., Ott, G., Parwaresch, R.~M., Pott, C., Rosenwald, A., Rosolowski, M.,
  Schwaenen, C., St{\"u}rzenhofecker, B., Szczepanowski, M., Trautmann, H.,
  Wacker, H.-H., Spang, R., Loeffler, M., Tr{\"u}mper, L., Stein, H., Siebert,
  R., {Molecular Mechanisms in Malignant Lymphomas Network Project of the
  Deutsche Krebshilfe}, Jun 2006. A biologic definition of burkitt's lymphoma
  from transcriptional and genomic profiling. New England Journal of Medicine
  354~(23), 2419--30.

\bibitem[{Iwasa et~al.(2005)Iwasa, Michor, Komarova, and
  Nowak}]{iwasa:JTB:2005}
Iwasa, Y., Michor, F., Komarova, N.~L., Nowak, M.~A., 2005. Population genetics
  of tumor supressor genes. Journal of Theoretical Biology 233, 15--23.

\bibitem[{Jones et~al.(2008)Jones, Chen, Parmigiani, Diehl, Beerenwinkel,
  Antal, Traulsen, Nowak, Siegel, Velculescu, Kinzler, Vogelstein, Willis, and
  Markowitz}]{Jones:PNAS:2008}
Jones, S., Chen, W.-D., Parmigiani, G., Diehl, F., Beerenwinkel, N., Antal, T.,
  Traulsen, A., Nowak, M.~A., Siegel, C., Velculescu, V., Kinzler, K.~W.,
  Vogelstein, B., Willis, J., Markowitz, S., 2008. Comparative lesion
  sequencing provides insights into tumor evolution. Proceedings of the
  National Academy of Sciences USA 105, 4283--4288.

\bibitem[{Justilien et~al.(2014)Justilien, Walsh, Ali, Thompson, Murray, and
  Fields}]{justilien:CC:2014}
Justilien, V., Walsh, M.~P., Ali, S.~A., Thompson, E.~A., Murray, N.~R.,
  Fields, A.~P., 2014. The prkci and sox2 oncogenes are coamplified and
  cooperate to activate hedgehog signaling in lung squamous cell carcinoma.
  Cancer Cell 25, 139--151.

\bibitem[{Kac and Cheung(2002)}]{kac:book:2002}
Kac, V.~G., Cheung, P., 2002. Quantum calculus. Universitext. Springer, New
  York.

\bibitem[{Knudson(1971)}]{knudson:PNAS:1971}
Knudson, A.~G., 1971. Mutation and cancer: Statistical study of retinoblastoma.
  Proceedings of the National Academy of Sciences USA 68, 820--823.

\bibitem[{Koekoek et~al.(2010)Koekoek, Lesky, and
  Swarttouw}]{koekoek:book:2010}
Koekoek, R., Lesky, P., Swarttouw, R.~F., 2010. Hypergeometric orthogonal
  polynomials and their q-analogues. Springer monographs in mathematics.
  Springer, Heidelberg.

\bibitem[{Lengauer et~al.(1998)Lengauer, Kinzler, and
  Vogelstein}]{lengauer:Nature:1998}
Lengauer, C., Kinzler, K.~W., Vogelstein, B., Dec 1998. Genetic instabilities
  in human cancers. Nature 396, 643--9.

\bibitem[{Lin et~al.(2012)Lin, Lov{\'e}n, Rahl, Paranal, Burge, Bradner, Lee,
  and Young}]{lin:Cell:2012}
Lin, C.~Y., Lov{\'e}n, J., Rahl, P.~B., Paranal, R.~M., Burge, C.~B., Bradner,
  J.~E., Lee, T.~I., Young, R.~A., 2012. Transcriptional amplification in tumor
  cells with elevated c-myc. Cell 151, 56--67.

\bibitem[{Love et~al.(2012)Love, Sun, Jima, Li, Zhang, Miles, Richards, Dunphy,
  Choi, Srivastava, Lugar, Rizzieri, Lagoo, Bernal-Mizrachi, Mann, Flowers,
  Naresh, Evens, Chadburn, Gordon, Czader, Gill, Hsi, Greenough, Moffitt,
  McKinney, Banerjee, Grubor, Levy, Dunson, and Dave}]{love:NG:2012}
Love, C., Sun, Z., Jima, D., Li, G., Zhang, J., Miles, R., Richards, K.~L.,
  Dunphy, C.~H., Choi, W. W.~L., Srivastava, G., Lugar, P.~L., Rizzieri, D.~A.,
  Lagoo, A.~S., Bernal-Mizrachi, L., Mann, K.~P., Flowers, C.~R., Naresh,
  K.~N., Evens, A.~M., Chadburn, A., Gordon, L.~I., Czader, M.~B., Gill, J.~I.,
  Hsi, E.~D., Greenough, A., Moffitt, A.~B., McKinney, M., Banerjee, A.,
  Grubor, V., Levy, S., Dunson, D.~B., Dave, S.~S., 2012. The genetic landscape
  of mutations in burkitt lymphoma. Nat Genet 44, 1321--1325.

\bibitem[{Luebeck and Moolgavkar(2002)}]{luebeck:PNAS:2002}
Luebeck, E.~G., Moolgavkar, S.~H., 2002. Multistage carcinogenesis and the
  incidence of colorectal cancer. Proceedings of the National Academy of
  Sciences USA 99~(23), 15095--100.

\bibitem[{McFarland et~al.(2013)McFarland, Korolev, Kryukov, Sunyaev, and
  Mirny}]{mcfarland:PNAS:2013}
McFarland, C.~D., Korolev, K.~S., Kryukov, G.~V., Sunyaev, S.~R., Mirny, L.~A.,
  Feb 2013. Impact of deleterious passenger mutations on cancer progression.
  Proceedings of the National Academy of Sciences USA 110, 2910--5.

\bibitem[{Meyer et~al.(2012)Meyer, Dobias, Weitz, Barrick, Quick, and
  Lenski}]{meyer:Science:2012}
Meyer, J.~R., Dobias, D.~T., Weitz, J.~S., Barrick, J.~E., Quick, R.~T.,
  Lenski, R.~E., 2012. Repeatability and contingency in he evolution of a key
  innovation in phage lambda. Science 335, 428--432.

\bibitem[{Michor et~al.(2004)Michor, Iwasa, and Nowak}]{michor:NRC:2004}
Michor, F., Iwasa, Y., Nowak, M.~A., 2004. Dynamics of cancer progression.
  Nature Reviews Cancer 4, 197--205.

\bibitem[{M{\"u}ller et~al.(1995)M{\"u}ller, Janz, Goedert, Potter, and
  Rabkin}]{muller:PNAS:1995}
M{\"u}ller, J.~R., Janz, S., Goedert, J.~J., Potter, M., Rabkin, C.~S., Jul
  1995. Persistence of immunoglobulin heavy chain/c-myc recombination-positive
  lymphocyte clones in the blood of human immunodeficiency virus-infected
  homosexual men. Proceedings of the National Academy of Sciences USA 92~(14),
  6577--81.

\bibitem[{Nie et~al.(2012)Nie, Hu, Wei, Cui, Yamane, Resch, Wang, Green,
  Tessarollo, Casellas, Zhao, and Levens}]{nie:Cell:2012}
Nie, Z., Hu, G., Wei, G., Cui, K., Yamane, A., Resch, W., Wang, R., Green,
  D.~R., Tessarollo, L., Casellas, R., Zhao, K., Levens, D., 2012. c-myc is a
  universal amplifier of expressed genes in lymphocytes and embryonic stem
  cells. Cell 151, 68--79.

\bibitem[{Nowak et~al.(2002)Nowak, Komarova, Sengupta, Jallepalli, Shih,
  Vogelstein, and Lengauer}]{nowak:PNAS:2002}
Nowak, M., Komarova, N.~L., Sengupta, A., Jallepalli, P., Shih, I., Vogelstein,
  B., Lengauer, C., 2002. The role of chromosomal instability in tumour
  initiation. Proceedings of the National Academy of Sciences USA 99~(25),
  16226--16231.

\bibitem[{Nowak et~al.(2004)Nowak, Michor, Komarova, and
  Iwasa}]{nowak:PNAS:2004}
Nowak, M.~A., Michor, F., Komarova, N.~L., Iwasa, Y., 2004. Evolutionary
  dynamics of tumor suppressor gene inactivation. Proceedings of the National
  Academy of Sciences USA 101, 10635--10638.

\bibitem[{Parmigiani et~al.(2009)Parmigiani, Boca, Lin, Kinzler, Velculescu,
  and Vogelstein}]{parmigiani:Genomics:2009}
Parmigiani, G., Boca, S., Lin, J., Kinzler, K.~W., Velculescu, V., Vogelstein,
  B., Jan 2009. Design and analysis issues in genome-wide somatic mutation
  studies of cancer. Genomics 93, 17--21.

\bibitem[{Pelengaris et~al.(2002)Pelengaris, Khan, and
  Evan}]{pelengaris:Cell:2002}
Pelengaris, S., Khan, M., Evan, G.~I., May 2002. Suppression of myc-induced
  apoptosis in beta cells exposes multiple oncogenic properties of myc and
  triggers carcinogenic progression. Cell 109~(3), 321--34.

\bibitem[{Reiter et~al.(2013)Reiter, Bozic, Allen, Chatterjee, and
  Nowak}]{reiter:EA:2013}
Reiter, J.~G., Bozic, I., Allen, B., Chatterjee, K., Nowak, M.~A., Jan 2013.
  The effect of one additional driver mutation on tumor progression.
  Evolutionary Applications 6, 34--45.

\bibitem[{Richter et~al.(2012)Richter, Schlesner, Hoffmann, Kreuz, Leich,
  Burkhardt, Rosolowski, Ammerpohl, Wagener, Bernhart, Lenze, Szczepanowski,
  Paulsen, Lipinski, Russell, Adam-Klages, Apic, Claviez, Hasenclever,
  Hovestadt, Hornig, Korbel, Kube, Langenberger, Lawerenz, Lisfeld, Meyer,
  Picelli, Pischimarov, Radlwimmer, Rausch, Rohde, Schilhabel, Scholtysik,
  Spang, Trautmann, Zenz, Borkhardt, Drexler, M{\"o}ller, MacLeod, Pott,
  Schreiber, Tr{\"u}mper, Loeffler, Stadler, Lichter, Eils, K{\"u}ppers,
  Hummel, Klapper, Rosenstiel, Rosenwald, Brors, Siebert, and {ICGC MMML-Seq
  Project}}]{richter:NG:2012}
Richter, J., Schlesner, M., Hoffmann, S., Kreuz, M., Leich, E., Burkhardt, B.,
  Rosolowski, M., Ammerpohl, O., Wagener, R., Bernhart, S.~H., Lenze, D.,
  Szczepanowski, M., Paulsen, M., Lipinski, S., Russell, R.~B., Adam-Klages,
  S., Apic, G., Claviez, A., Hasenclever, D., Hovestadt, V., Hornig, N.,
  Korbel, J.~O., Kube, D., Langenberger, D., Lawerenz, C., Lisfeld, J., Meyer,
  K., Picelli, S., Pischimarov, J., Radlwimmer, B., Rausch, T., Rohde, M.,
  Schilhabel, M., Scholtysik, R., Spang, R., Trautmann, H., Zenz, T.,
  Borkhardt, A., Drexler, H.~G., M{\"o}ller, P., MacLeod, R. A.~F., Pott, C.,
  Schreiber, S., Tr{\"u}mper, L., Loeffler, M., Stadler, P.~F., Lichter, P.,
  Eils, R., K{\"u}ppers, R., Hummel, M., Klapper, W., Rosenstiel, P.,
  Rosenwald, A., Brors, B., Siebert, R., {ICGC MMML-Seq Project}, 2012.
  Recurrent mutation of the id3 gene in burkitt lymphoma identified by
  integrated genome, exome and transcriptome sequencing. Nature Genetics
  44~(12), 1316--20.

\bibitem[{Salaverria and Siebert(2011)}]{salaverria:JCO:2011}
Salaverria, I., Siebert, R., 2011. The gray zone between burkitt's lymphoma and
  diffuse large b-cell lymphoma from a genetics perspective. Journal of
  Clinical Oncology 29, 1835--1843.

\bibitem[{Sander et~al.(2012)Sander, Calado, Srinivasan, K{\"o}chert, Zhang,
  Rosolowski, Rodig, Holzmann, Stilgenbauer, Siebert, Bullinger, and
  Rajewsky}]{sander:CC:2012}
Sander, S., Calado, D.~P., Srinivasan, L., K{\"o}chert, K., Zhang, B.,
  Rosolowski, M., Rodig, S.~J., Holzmann, K., Stilgenbauer, S., Siebert, R.,
  Bullinger, L., Rajewsky, K., Aug 2012. Synergy between pi3k signaling and myc
  in burkitt lymphomagenesis. Cancer Cell 22~(2), 167--79.

\bibitem[{Schmitz et~al.(2014)Schmitz, Ceribelli, Pittaluga, Wright, and
  Staudt}]{schmitz:cshp:2014}
Schmitz, R., Ceribelli, M., Pittaluga, S., Wright, G., Staudt, L.~M., 2014.
  Oncogenic mechanisms in burkitt lymphoma. Cold Spring Harb Perspect Med
  4~(2).

\bibitem[{Schmitz et~al.(2012)Schmitz, Young, Ceribelli, Jhavar, Xiao, Zhang,
  Wright, Shaffer, Hodson, Buras, Liu, Powell, Yang, Xu, Zhao, Kohlhammer,
  Rosenwald, Kluin, M{\"u}ller-Hermelink, Ott, Gascoyne, Connors, Rimsza,
  Campo, Jaffe, Delabie, Smeland, Ogwang, Reynolds, Fisher, Braziel, Tubbs,
  Cook, Weisenburger, Chan, Pittaluga, Wilson, Waldmann, Rowe, Mbulaiteye,
  Rickinson, and Staudt}]{schmitz:Nature:2012}
Schmitz, R., Young, R.~M., Ceribelli, M., Jhavar, S., Xiao, W., Zhang, M.,
  Wright, G., Shaffer, A.~L., Hodson, D.~J., Buras, E., Liu, X., Powell, J.,
  Yang, Y., Xu, W., Zhao, H., Kohlhammer, H., Rosenwald, A., Kluin, P.,
  M{\"u}ller-Hermelink, H.~K., Ott, G., Gascoyne, R.~D., Connors, J.~M.,
  Rimsza, L.~M., Campo, E., Jaffe, E.~S., Delabie, J., Smeland, E.~B., Ogwang,
  M.~D., Reynolds, S.~J., Fisher, R.~I., Braziel, R.~M., Tubbs, R.~R., Cook,
  J.~R., Weisenburger, D.~D., Chan, W.~C., Pittaluga, S., Wilson, W., Waldmann,
  T.~A., Rowe, M., Mbulaiteye, S.~M., Rickinson, A.~B., Staudt, L.~M., Oct
  2012. Burkitt lymphoma pathogenesis and therapeutic targets from structural
  and functional genomics. Nature 490~(7418), 116--20.

\bibitem[{Serrano et~al.(1997)Serrano, Lin, McCurrach, Beach, and
  Lowe}]{serrano:CELL:1997}
Serrano, M., Lin, A.~W., McCurrach, M.~E., Beach, D., Lowe, S.~W., 1997.
  Oncogenic ras provokes premature cell senescence associated with accumulation
  of p53 and p16ink4a. Cell 88, 593--602.

\bibitem[{Sj{\"o}blom et~al.(2006)Sj{\"o}blom, Jones, Wood, Parsons, Lin,
  Barber, Mandelker, Leary, Ptak, Silliman, Szabo, Buckhaults, Farrell, Meeh,
  Markowitz, Willis, Dawson, Willson, Gazdar, Hartigan, Wu, Liu, Parmigiani,
  Park, Bachman, Papadopoulos, Vogelstein, Kinzler, and
  Velculescu}]{sjoblom:Science:2006}
Sj{\"o}blom, T., Jones, S., Wood, L., Parsons, D., Lin, J., Barber, T.,
  Mandelker, D., Leary, R., Ptak, J., Silliman, N., Szabo, S., Buckhaults, P.,
  Farrell, C., Meeh, P., Markowitz, S., Willis, J., Dawson, D., Willson, J.,
  Gazdar, A., Hartigan, J., Wu, L., Liu, C., Parmigiani, G., Park, B., Bachman,
  K., Papadopoulos, N., Vogelstein, B., Kinzler, K., Velculescu, V., 2006. The
  consensus coding sequences of human breast and colorectal cancers. Science
  314, 268--274.

\bibitem[{Szappanos et~al.(2011)Szappanos, Kov{\'a}cs, Szamecz, Honti,
  Costanzo, Baryshnikova, Gelius-Dietrich, Lercher, Jelasity, Myers, Andrews,
  Boone, Oliver, P{\'a}l, and Papp}]{szappanos:NG:2011}
Szappanos, B., Kov{\'a}cs, K., Szamecz, B., Honti, F., Costanzo, M.,
  Baryshnikova, A., Gelius-Dietrich, G., Lercher, M.~J., Jelasity, M., Myers,
  C.~L., Andrews, B.~J., Boone, C., Oliver, S.~G., P{\'a}l, C., Papp, B., Jul
  2011. An integrated approach to characterize genetic interaction networks in
  yeast metabolism. Nature Genetics 43~(7), 656--62.

\bibitem[{Szendro et~al.(2013)Szendro, Franke, de~Visser, and
  Krug}]{szendro:PNAS:2013}
Szendro, I.~G., Franke, J., de~Visser, J. A. G.~M., Krug, J., 2013.
  Predictability of evolution depends nonmonotonically on population size.
  Proceedings of the National Academy of Sciences USA 110, 571--576.

\bibitem[{Varley(2003)}]{varley:MMB:2003}
Varley, J., 2003. Tp53, hchk2, and the li-fraumeni syndrome. Methods Mol Biol
  222, 117--29.

\bibitem[{Vogelstein and Kinzler(2004)}]{vogelstein:NM:2004}
Vogelstein, B., Kinzler, K., 2004. Cancer genes and the pathways they control.
  Nature Medicine 10, 789--799.

\bibitem[{Wang et~al.(2011)Wang, Tai, Lisanti, and Liao}]{wang:CBT:2011}
Wang, C., Tai, Y., Lisanti, M.~P., Liao, D.~J., 2011. c-myc induction of
  programmed cell death may contribute to carcinogenesis: a perspective
  inspired by several concepts of chemical carcinogenesis. Cancer Biology and
  Therapy 11, 615--626.

\bibitem[{Weinreich et~al.(2005)Weinreich, Watson, and
  Chao}]{weinreich:Evolution:2005a}
Weinreich, D.~M., Watson, R., Chao, L., 2005. Perspective: sign epistasis and
  genetic constraint on evolutionary trajectories. Evolution 56~(6),
  1165--1174.

\bibitem[{Wodarz and Komarova(2005)}]{wodarz:book:2005}
Wodarz, D., Komarova, N., 2005. Computational biology of cancer: Lecture notes
  and mathematical modeling. World Scientific Publishing.

\bibitem[{Wolf et~al.(2000)Wolf, Brodie, and Wade}]{wolf:book:2000}
Wolf, J.~B., Brodie, E.~D., Wade, M.~J., 2000. Epistasis and the evolutionary
  process. Oxford University Press, USA.

\bibitem[{Wood et~al.(2007)Wood, Parsons, Jones, Lin, Sjoblom, Leary, Shen,
  Boca, Barber, Ptak, Silliman, Szabo, Dezso, Ustyanksky, Nikolskaya, Nikolsky,
  Karchin, Wilson, Kaminker, Zhang, Croshaw, Willis, Dawson, Shipitsin,
  Willson, Sukumar, Polyak, Park, Pethiyagoda, Pant, Ballinger, Sparks,
  Hartigan, Smith, Suh, Papadopoulos, Buckhaults, Markowitz, Parmigiani,
  Kinzler, Velculescu, and Vogelstein}]{wood:Science:2007}
Wood, L.~D., Parsons, D.~W., Jones, S., Lin, J., Sjoblom, T., Leary, R.~J.,
  Shen, D., Boca, S.~M., Barber, T., Ptak, J., Silliman, N., Szabo, S., Dezso,
  Z., Ustyanksky, V., Nikolskaya, T., Nikolsky, Y., Karchin, R., Wilson, P.~A.,
  Kaminker, J.~S., Zhang, Z., Croshaw, R., Willis, J., Dawson, D., Shipitsin,
  M., Willson, J. K.~V., Sukumar, S., Polyak, K., Park, B.~H., Pethiyagoda,
  C.~L., Pant, P. V.~K., Ballinger, D.~G., Sparks, A.~B., Hartigan, J., Smith,
  D.~R., Suh, E., Papadopoulos, N., Buckhaults, P., Markowitz, S.~D.,
  Parmigiani, G., Kinzler, K.~W., Velculescu, V.~E., Vogelstein, B., 2007. The
  genomic landscapes of human breast and colorectal cancers. Science
  318~(5853), 1108--1113.

\bibitem[{Zech et~al.(1976)Zech, Haglund, Nilsson, and Klein}]{zech:IJC:1976}
Zech, L., Haglund, U., Nilsson, K., Klein, G., 1976. Characteristic chromosomal
  abnormalities in biopsies and lymphoid-cell lines from patients with burkitt
  and non-burkitt lymphomas. Int J Cancer 17, 47--56.

\end{thebibliography}
\end{document}